\DeclareMathAlphabet{\mcz}{OT1}{pzc}{m}{it}
\newtheorem{theorem}{Theorem}
\newtheorem{lemma}{Lemma}
\newtheorem{corollary}{Corollary}
\newcommand{\pft}{\partial_t}
\newcommand{\pfx}{\partial_x}
\newcommand{\pfy}{\partial_y}
\newcommand{\hlf}{\frac{1}{2}}
\newcommand{\mcl}[1]{\mathcal{#1}}
\newcommand{\wb}{\bar{w}_\alpha}
\newcommand{\w}{w_\alpha}
\title{\LARGE \bf
A Semi-Definite Programming Approach to Stability Analysis of Linear Partial Differential Equations
}
\author{Aditya Gahlawat$^{1}$ and Giorgio Valmorbida$^{2}$
\thanks{$^{1}$Aditya Gahlawat is with the Department of Mechanical, Materials and Aerospace Engineering (MMAE),
        Illinois Institute of Technology, Chicago, IL-60616, USA
        {\tt\small agahlawa@hawk.iit.edu}}%
\thanks{$^{2}$Giorgio Valmorbida is an associate professor at  Laboratoire des Signaux et Syst\`{e}mes, CentraleSup\'{e}lec, CNRS, Univ. Paris-Sud, Universit\'{e} Paris-Saclay, 3 Rue Joliot Curie, Gif-sur-Yvette 91192, France
        {\tt\small giorgio.valmorbida@l2s.centralesupelec.fr}}%
}
\begin{document}

\maketitle
\thispagestyle{empty}
\pagestyle{empty}

\begin{abstract}

We consider the stability analysis of a large class of linear 1-D PDEs with polynomial data. This class of PDEs contains, as examples, parabolic and hyperbolic PDEs with spatially varying coefficients, and systems of in-domain/boundary coupled PDEs. Our approach is Lyapunov based which allows us to reduce the stability problem to verification of integral inequalities on the subspaces of Hilbert spaces. Then, using fundamental theorem of calculus and Green's theorem, we construct a polynomial optimization problem to verify the integral inequalities. Constraining the solution of the polynomial optimization problem to belong to the set of sum-of-squares polynomials subject to affine constraints allows us to use semi-definite programming to algorithmically construct Lyapunov certificates of stability for the systems under consideration. We also provide numerical results of the application of the proposed method on different types of PDEs.  
\end{abstract}

\section{INTRODUCTION}\label{sec:intro}

Temporal evolution of processes involving spatially distributed physical quantities requires Partial Differential Equations (PDEs) to produce accurate models for analysis and control~\cite{witrant2007control},~\cite{bovskovic2003stabilization}. Analysis of PDE systems brings forth technical challenges relative to Ordinary Differential Equations (ODEs) due to the infinite dimensional nature of the state-space. Owing to the maturity of research on ODEs, a traditional method of analysis is to reduce PDE systems, using spatial or spectral methods, to a system of ODEs which can then be analyzed with relative ease~\cite{el2003analysis},~\cite{goulart2012global}. 

Recently, various \textit{direct methods}, i.e., methods without finite-dimensional approximation, have been developed for analysis and control of PDEs. For controller synthesis, backstepping is an oft used method wherein the problem of stabilizing boundary feedback design is reduced to a hyperbolic PDE whose solution can be obtained either analytically or numerically~\cite{krstic2008boundary}. Similarly, one may use Lyapunov's second method for infinite-dimensional systems to establish stability~\cite{datko1970extending},~\cite{movchan1959direct}. Lyapunov's method requires that the structure of a Lyapunov Functional (LF) be chosen a priori. In this paper we choose quadratic LF candidates of the form used in our previous work in~\cite{gahlawat2016convex}. The reason for this choice, as we established in~\cite{gahlawat2016convex}, is that such LF candidates are successful in providing stability certificates for parabolic PDEs. Additionally, also in~\cite{gahlawat2016convex}, we showed that such LFs are certificates of stability for parabolic PDEs actuated by backstepping boundary feedback. 

The LF candidates we consider lead to Lyapunov inequalities defined by integral expressions containing quadratic terms in the state defined on one-dimensional (1-D) and two-dimensional (2-D) domains and their respective boundaries. Moreover, the verification of these inequalities has to be performed on the spaces of functions defined by the boundary conditions of the PDE whose stability we wish to establish. The interplay between the 2-D domain/1-D domain and the boundary values have to be thus accounted for with the Fundamental Theorem of Calculus (FTC) and Green's theorem, which are used in a similar way to~\cite{valmorbida2015convex}.

\subsection{Contribution} The paper presents a convex optimization based method for the verification of integral inequalities on (subspaces of) Hilbert spaces and its application to stability analysis of PDE systems with polynomial data. The proposed method contains previous approaches by the authors, see, for e.g.~\cite{gahlawat2016convex},~\cite{valmorbida2016stability}, thus encompassing a larger class of PDEs.

In order to accomplish our goals  we first reduce the problem of stability analysis to the verification of two integral inequalities on (subspaces of) Hilbert spaces. To verify these inequalities, we use FTC and Green's theorem to construct polynomial equations whose solution, if exists, verifies the Lyapunov integral inequalities. Finally, we show that this polynomial equation may be solved by searching for Sum-of-Squares (SOS) polynomials under affine constraints thus rendering the problem of searching for the solution as a Semi-Definite Program (SDP)~\cite[Chapter~3]{blekherman2012semidefinite},~\cite{vandenberghe1996semidefinite}, a convex optimization problem.

\subsection{Outline}

The paper is organized as follows. In Section~\ref{sec:pro_statement} we present the problem statement and formulate the stability analysis problem as the problem of verification of two integral inequalities. In Section~\ref{sec:non_neg_ineq} we use positive semi-definite polynomial matrices to characterize a class of non-negative/positive integral inequalities on Hilbert spaces. In Section~\ref{sec:slack} we use FTC and Green's theorem to construct integral terms which relate the domains on which the LF candidates are defined to their boundaries, which we call slack integrals. Finally, in Section~\ref{sec:main} we present the main result and numerical experiments. 

\subsection{Notation}\label{sec:notation}

We denote by $\overline{\Omega}$ the set $\{(x,y) \in \R^2:0\leq y \leq x \leq 1\}$, $\underline{\Omega}=\{(x,y) \in \R^2:0\leq x \leq y \leq 1\}$ and $\Delta = [0,1] \times [0,1]$. In the following definitions $\alpha,\beta \in \N$. For $w:[0,1] \rightarrow \R^\beta$, $w \in \mcl{C}^\alpha([0,1])$, we denote by $\pfx^i w(x)$ the $i$-th derivative of $w$ and
\begin{align*}
w_\alpha(x) = & \bmat{w(x)^T, & \pfx w(x)^T, & \cdots, & \pfx^\alpha w(x)^T}^T,\\
w_\alpha^b =& \bmat{w_{\alpha-1}(1)^T, & w_{\alpha-1}(0)^T}^T,\\
\wb(x) =  &\bmat{w_\alpha(x)^T, & \left(w_\alpha^b\right)^T}^T.
\end{align*} Thus, $w_\alpha:[0,1] \rightarrow \R^{\beta(\alpha+1)}$, $w_\alpha^b \in \R^{2\beta\alpha}$ and $\wb:[0,1] \rightarrow \R^{\beta(3\alpha+1)}$. We denote by $N_\partial,N \in \N^{\beta\alpha \times \beta(\alpha+1)}$ and $N_0,N_1 \in \N^{\beta\alpha \times 2 \beta\alpha}$ the matrices satisfying 
\begin{alignat*}{2}
&\pfx w_{\alpha-1}(x)=N_\partial w_{\alpha}(x), \quad &&w_{\alpha-1}(x)=N w_{\alpha}(x), \\
&w_{\alpha-1}(0)=N_0 w_\alpha^b,\quad &&w_{\alpha-1}(1)=N_1 w_\alpha^b.
\end{alignat*} 

We denote 
\begin{align*}
\mcl{H}^\alpha\left([0,1];\R^\beta \right)=\Big\{ w:[0,1] &\rightarrow \R^\beta~:~w, \pfx w, \dots, \pfx^{\alpha-1}w\text{ are  }  \\
& \text{absolutely continuous on }[0,1]~\text{and}\\
& \int_0^1 \left(\pfx^{\alpha}w(x)\right)^T \left(\pfx^{\alpha}w(x)\right) dx<\infty   \Big\}.
\end{align*} We also write $\mcl{L}_2\left([0,1];\R^\beta \right) = \mcl{H}^0([0,1])$. The space $\mcl{H}^\alpha\left([0,1];\R^\beta \right)$ is endowed with the standard inner product and norm
\[\norm{w}_{\mcl{H}^\alpha}= \sqrt{\ip{w_\alpha}{w_\alpha}_{\mcl{H}^\alpha}} =\sqrt{\int_0^1 w_\alpha(x)^T w_\alpha(x)dx},\] and the space $\mcl{L}_2\left([0,1];\R^\beta \right)$ has the norm and inner product $\norm{\cdot}_{\mcl{L}_2} = \norm{\cdot}_{\mcl{H}^0}$ and $\ip{\cdot}{\cdot}_{\mcl{L}_2}=\ip{\cdot}{\cdot}_{\mcl{H}^0}$, respectively.

For any $m,n \in \N$, we denote by $0_{m,n}$ the matrix of zeros of dimensions $m$-by-$n$ and $0_m$ when $m=n$. Similarly, we denote by $I_n$ the identity matrix of dimensions $n$-by-$n$. The set of symmetric matrices of dimension $n$-by$n$ is donated by $\S^n$.
For any square matrix $Q$, we denote  $He(Q)=\hlf\left(Q+Q^T\right)$.

We denote by $\mcl{S}^{n}[x]$ and $\mcl{S}^{n}[(x,y)]$ the sets of symmetric polynomial matrices of size $n$-by-$n$ in variables $x$, and $x$ and $y$, respectively. Similarly, we denote by $\mcl{R}^{m \times n}[x]$ and $\mcl{R}^{m \times n}[(x,y)]$ the sets of real polynomial matrices of size $m$-by-$n$ in variables $x$ and $y$. We denote by $\Sigma^n[x]$ the set of Sum-of-Squares (SOS) polynomials which, for all $x \in [0,1]$, belong to $\S^n$. polynomials in the variable $x$. Note that, by definition, an SOS polynomial is non-negative for all $x \in \R$~\cite[Chapter~3]{blekherman2012semidefinite}. We say that a polynomial $S \in \mcl{S}^n[x]$ is positive semi-definite on $[0,1]$ if $S(x) \succeq 0$, for all $x \in [0,1]$. Given $\alpha,\beta,d \in \N$, we define
\begin{equation}\label{eqn:Y}
Y_{q(\alpha,\beta,d)}(x,y)=I_{\beta(\alpha+1)}\otimes z(x,y) \in \R^{q(\alpha,\beta,d) \times \beta(\alpha+1)},
\end{equation} where $q(\alpha,\beta,d)=\hlf \beta(\alpha+1)(d+2)(d+1)$, $z(x,y) \in \R^{\hlf (d+2)(d+1)}$ is the vector of monomials in $x$ and $y$ up to degree $d$ and $\otimes$ denotes the Kronecker product. For example, for $d=2$, we have
\[z(x,y)=\bmat{1 & x & y & x^2 & xy & y^2}^T.\]

For any bivariate function $K:\overline{\Omega} \rightarrow \R^{n \times n}$, we define the linear map
\[\Gamma \left[K \right] = \begin{cases} K(x,y), & \quad x \geq y   \\ 
K(y,x)^T, & \quad y > x \end{cases},  \] thus satisfying, for any $v:[0,1]\rightarrow \R^n$,
\begin{align*}
\int_\Delta v(x)^T \Gamma\left[K\right] v(y) d \Delta =& \int_0^1 \int_0^x v(x)^T K(x,y) v(y) dy dx \\
& + \int_0^1 \int_x^1 v(x)^T K(y,x)^T v(y) dy dx.
\end{align*}

\section{PROBLEM STATEMENT}\label{sec:pro_statement}

Consider the following class of linear PDEs 
\begin{subequations}\label{eqn:PDE_complete}
\begin{align}
&\pft w(x,t) =  A(x)\w(x,t)  , \label{eqn:PDE}
\end{align} where $A \in \mcl{R}^{\beta \times \beta(\alpha+1)}[x]$. The state $w:[0,1] \times [0,\infty) \rightarrow \R^\beta$ belongs to the set $\mcl{B}$ of functions satisfying the boundary conditions, defined as
\begin{align}
\mcl{B}=&\Bigl\{ w \in \mcl{H}^\alpha\left([0,1];\R^\beta \right)~ :~    F w_\alpha^b =0_{\beta\alpha,1}  \Bigr\}, \label{eqn:BC}
\end{align} where $F \in \R^{\beta\alpha \times 2 \beta\alpha}$.
\end{subequations}
We now provide a few examples of PDEs which can be cast in the form of~\eqref{eqn:PDE_complete}. Each of the following PDEs is parameterized by a positive scalar $\lambda$. 

\subsubsection{Example~$1$}\label{exmp1:para} Consider the following \textbf{parabolic PDE with distributed coefficients}
\begin{subequations}\label{eqn:exmp:parabolic}
\begin{align}
\pft v(x,t) = & \left(x^2 + 1  \right)\pfx^2 v(x,t)+ 0.5 x \pfx v(x,t) + \lambda v(x,t), \\
 v(0,t)=&0, \quad \pfx v(1,t)=0.
\end{align}
\end{subequations}  This PDE may be set in the form of~\eqref{eqn:PDE_complete} with $\alpha=2,\beta=1$ and
\begin{align*}
w(x,t)=&v(x,t),\\
A(x)=&\bmat{\lambda & 0.5x & x^2+1},\quad F = \bmat{0 & 0 & 1 & 0 \\ 0 & 1 & 0 & 0}.
\end{align*} 

\subsubsection{Example~$2$}\label{exmp2:coupled_hyperbolic} Now consider the following \textbf{system of hyperbolic PDEs coupled in-domain and at the boundaries}
\begin{subequations}\label{eqn:exmp:hyperbolic} 
\begin{align}
&\pft v_1(x,t)= \left(\lambda - 1\right)\left(x^2+1  \right)\pfx v_1(x,t)+(x-3)v_2(x,t),\\
&\pft v_2(x,t)= \left(x+1  \right)\pfx v_2(x,t),\\
&v_1(0,t)-3v_2(0,t)=0, \quad v_2(1,t)=0.
\end{align}
\end{subequations}
This PDE may be set in the form of~\eqref{eqn:PDE_complete} with $\alpha=1,\beta=2$ and
\begin{align*}
w(x,t)=&\bmat{v_1(x,t) & v_2(x,t)}^T,\\
A(x)=&\bmat{0 & x-3 & (\lambda-1)\left(x^2+1 \right) & 0 \\ 0 & 0 & 0 & x+1},\\
 F = & \bmat{0 & 0 & 1 & -3 \\ 0 & 1 & 0 & 0}.
\end{align*}

\subsubsection{Example~$3$}\label{exmp3:beam} Finally, consider the following \textbf{Euler-Bernoulli beam model} 
\begin{subequations}\label{eqn:exmp:beam}
\begin{align}
&\pft^2 v(x,t) + \pfx^4 v(x,t)=0,\\
& \pfx^2 v(0,t)-\frac{1}{1-\lambda}\partial_{xt}v(0,t)=0, \\
& \pfx^3 v(0,t)+(1-\lambda)\pft v(0,t)=0, \quad \pfx^2 v(1,t)=0,\\
&v(1,t)=0.
\end{align}
\end{subequations}
  We may re-write~\eqref{eqn:exmp:beam} as
\begin{align*}
&\pft \bmat{\pfx^2 v(x,t) \\ \pft v(x,t)}=\bmat{ \pft \pfx^2  v(x,t) \\ -\pfx^4 v(x,t)},\\
& \pfx^2 v(0,t)-\frac{1}{1-\lambda}\partial_{xt}v(0,t)=0, \\
& \pfx^3 v(0,t)+(1-\lambda)\pft v(0,t)=0, \quad \pfx^2 v(1,t)=0,\\
&\pft v(1,t)=0.
\end{align*} With this representation, we may write~\eqref{eqn:exmp:beam} in the form of~\eqref{eqn:PDE_complete} with $\alpha=2,\beta=2$ and
\begin{align*}
w(x,t)=&\bmat{\pfx^2 v(x,t) & \pft v(x,t)}^T,\\
A(x)=&\bmat{0 & 0 & 0 & 0 & 0 & 1 \\ 0 & 0 & 0 & 0 & -1 & 0},\\
 F = &\bmat{0 & 0 & 0 & 0 & 1 & 0 & 0 & -\frac{1}{1-\lambda} \\
              0 & 0 & 0 & 0 & 0 & 1-\lambda & 1 & 0\\
              1 & 0 & 0 & 0 & 0 & 0 & 0 & 0\\
              0 & 1 & 0 & 0 & 0 & 0 & 0 & 0}.
\end{align*}

We wish to establish the stability of~\eqref{eqn:PDE_complete} by constructing Lyapunov functional (LF) certificates of exponential stability. In particular, we wish to construct LFs of the form
\begin{align}
\mcz{V}(w) = & \hlf \int_0^1 w(x,t)^T T_b(x) w(x,t)dx \notag \\
&\label{eqn:V_int} + \hlf \int_\Delta w(x,t)^T \Gamma \left[ \bar{T}\right] w(y,t) d\Delta,
\end{align} where $T_b \in \mcl{S}^{\beta}[x]$ and $\bar{T} \in \mcl{R}^{\beta \times \beta}[(x,y)]$. 
As stated in the Introduction, parabolic PDE systems with boundary backstepping based control laws admit LF certificates of stability which have the same structure as~\eqref{eqn:V_int}~\cite{gahlawat2016convex}. 
The numerical results in~\cite{gahlawat2016convex} also indicate that such LFs are not conservative for spatially distributed scalar parabolic PDEs. 

Let us now formulate conditions for the exponential stability of~\eqref{eqn:PDE_complete}. We begin by computing the time derivative of~\eqref{eqn:V_int} along the trajectories of~\eqref{eqn:PDE_complete}. Consider a scalar $\delta >0$ and define
\begin{equation}\label{eqn:Vdot}
\mcz{V}_d(w) = -\pft \mcz{V}(w)-2\delta \mcz{V}(w),
\end{equation} which can be put in the form (see the Appendix)
\begin{align}
\mcz{V}_d(w)=&\label{eqn:Vd_int} \int_0^1 \wb(x)^T U_b(x) \wb(x)dx \notag \\
&+ \int_\Delta \w(x)^T \Gamma \left[\bar{U} \right] \w(y) d\Delta, \quad w \in \mcl{H}^\alpha\left([0,1];\R^\beta \right),
\end{align} where
\begin{align*}
U_b(x) = & -He \left( \bmat{T_b(x)A(x) & 0_{\beta,2\beta\alpha} \\ 0_{3\beta\alpha,\beta(\alpha+1)} & 0_{3\beta\alpha,2\beta\alpha}}\right) \\
&  -He \left( \bmat{\delta T_b(x) & 0_{\beta,3\beta\alpha} \\ 0_{3\beta\alpha,\beta} & 0_{3\beta\alpha}}\right),\\
\bar{U}(x,y) = &  -\hlf \left(\bmat{ \bar{T}(x,y)^T A(x) \\ 0_{\beta\alpha,\beta(\alpha+1})}^T +
\bmat{\bar{T}(x,y) A(y) \\ 0_{\beta\alpha,\beta(\alpha+1)}} \right)\\
&   -\hlf 
\bmat{2\delta \bar{T}(x,y) & 0_{\beta,\beta\alpha} \\ 0_{\beta\alpha,\beta} & 0_{\beta\alpha}}.
\end{align*}

The following theorem casts the verification of exponential stability of~\eqref{eqn:PDE_complete} as the verification of two integral inequalities.
\begin{theorem}\label{thm:analysis}
Given the PDE~\eqref{eqn:PDE_complete}, suppose there exist positive scalars $\mu,\delta$ and polynomial matrices $T_b \in \mcl{S}^{\beta}[x]$ and $\bar{T} \in \mcl{R}^{\beta \times \beta}[(x,y)]$ such that
\begin{alignat}{2}
&\label{eqn:stab:V_condition}\mcz{V}(w) \geq \mu \norm{w}^2_{\mcl{L}_2}, \quad &&\forall w \in \mcl{L}_2\left([0,1];\R^\beta \right),\\ 
&\label{eqn:stab:Vd_condition}\mcz{V}_d(w) \geq 0, \quad &&\forall w \in \mcl{B} \subset \mcl{H}^\alpha\left([0,1];\R^\beta \right),
\end{alignat} where $\mcz{V}(w)$ and $\mcz{V}_d(w)$ are defined in~\eqref{eqn:V_int} and~\eqref{eqn:Vd_int}, respectively.

Then, there exists a positive scalar $\kappa$ such that the solution of~\eqref{eqn:PDE_complete} satisfies
\begin{equation}\label{eqn:stab:conclusion}
\norm{w(\cdot,t)}_{\mcl{L}_2} \leq \kappa e^{- \delta t} \norm{w(\cdot,0)}_{\mcl{L}_2}, \quad \forall t \geq 0.
\end{equation} 
\end{theorem}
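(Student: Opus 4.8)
The plan is to combine condition~\eqref{eqn:stab:V_condition} with a matching upper bound on $\mcz{V}$ to show that the Lyapunov functional is equivalent to the squared $\mcl{L}_2$ norm, and then to read condition~\eqref{eqn:stab:Vd_condition} as a scalar differential inequality along trajectories from which the exponential decay follows by integration.

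First I would establish an upper bound $\mcz{V}(w) \leq M \norm{w}^2_{\mcl{L}_2}$ valid for all $w \in \mcl{L}_2\left([0,1];\R^\beta\right)$. The crucial observation is that $\mcz{V}$ in~\eqref{eqn:V_int} depends on $w$ alone and not on its spatial derivatives, so the bound lives entirely in $\mcl{L}_2$. Since the polynomial matrices $T_b$ and $\bar{T}$ are continuous, hence bounded, on the compact sets $[0,1]$ and $\om$ respectively, a Cauchy--Schwarz estimate applied to each of the two integral terms produces such an $M$, depending only on $\sup_x \norm{T_b(x)}$ and $\sup_{(x,y)} \norm{\bar{T}(x,y)}$. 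Together with~\eqref{eqn:stab:V_condition} this yields the two-sided bound
\begin{equation*}
\mu \norm{w}^2_{\mcl{L}_2} \leq \mcz{V}(w) \leq M \norm{w}^2_{\mcl{L}_2}.
\end{equation*}

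Next I would use that the flow of~\eqref{eqn:PDE_complete} preserves $\mcl{B}$, so $w(\cdot,t) \in \mcl{B}$ for every $t \geq 0$ and condition~\eqref{eqn:stab:Vd_condition} applies at each instant. By the definition~\eqref{eqn:Vdot} of $\mcz{V}_d$ and the reduction carried out in the Appendix, one has $\mcz{V}_d(w(\cdot,t)) = -\frac{d}{dt}\mcz{V}(w(\cdot,t)) - 2\delta\,\mcz{V}(w(\cdot,t))$ along trajectories, so~\eqref{eqn:stab:Vd_condition} reads $\frac{d}{dt}\mcz{V}(w(\cdot,t)) \leq -2\delta\,\mcz{V}(w(\cdot,t))$. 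Integrating this inequality (Gr\"onwall) gives $\mcz{V}(w(\cdot,t)) \leq e^{-2\delta t}\mcz{V}(w(\cdot,0))$, and chaining with the two-sided bound,
\begin{equation*}
\mu \norm{w(\cdot,t)}^2_{\mcl{L}_2} \leq \mcz{V}(w(\cdot,t)) \leq e^{-2\delta t}\mcz{V}(w(\cdot,0)) \leq M e^{-2\delta t}\norm{w(\cdot,0)}^2_{\mcl{L}_2},
\end{equation*}
so taking square roots delivers~\eqref{eqn:stab:conclusion} with $\kappa = \sqrt{M/\mu}$.

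The main obstacle I anticipate is not the algebra but the regularity required to differentiate $\mcz{V}$ along trajectories: the identity $\frac{d}{dt}\mcz{V} = \pft\mcz{V}$ and the integration-by-parts behind~\eqref{eqn:Vd_int} presuppose solutions smooth enough for the boundary terms generated by the Fundamental Theorem of Calculus to be well defined. I would resolve this by first proving the estimate for strong solutions issuing from a dense set of sufficiently regular initial data in $\mcl{B}$, and then extending to all initial data in $\mcl{L}_2$ by density together with continuous dependence on the initial condition (equivalently, via a $C_0$-semigroup argument once well-posedness of~\eqref{eqn:PDE_complete} on $\mcl{L}_2$ is granted), noting that the constant $\kappa = \sqrt{M/\mu}$ is uniform and therefore survives the limit.
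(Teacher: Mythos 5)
Your proof follows essentially the same route as the paper's: a two-sided bound $\mu\norm{w}^2_{\mcl{L}_2}\leq\mcz{V}(w)\leq\theta\norm{w}^2_{\mcl{L}_2}$ from the boundedness of the polynomial kernels, the differential inequality $-\pft\mcz{V}-2\delta\mcz{V}\geq 0$ along trajectories, integration in time, and $\kappa=\sqrt{\theta/\mu}$. Your added remarks on regularity and density of smooth initial data are a point of extra care the paper's proof glosses over, but they do not change the argument.
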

\begin{proof}
Let us choose the LF candidate $\mcz{V}(w)$. Since the condition in~\eqref{eqn:stab:V_condition} holds, we have that there exists a positive scalar $\theta$ such that
\begin{equation}\label{eqn:stab:V}
\mu \norm{w(\cdot,t)}_{\mcl{L}_2}^2 \leq \mcz{V}(w) \leq \theta \norm{w(\cdot,t)}_{\mcl{L}_2}^2, \quad \forall t \geq 0,
\end{equation} where the upper bound is a consequence of $T_b$ and $\bar{T}$ being polynomial matrices defined on bounded domains.

Now, for this LF candidate, we have from~\eqref{eqn:Vdot} that
\begin{align*}
\mcz{V}_d(w)=&-\pft \mcz{V}(w)-2\delta \mcz{V}(w),
\end{align*} and since~\eqref{eqn:stab:Vd_condition} holds for all $w \in \mcl{B}$ and thus for $w$ that solve~\eqref{eqn:PDE_complete}, we have that
\begin{equation}\label{eqn:stab:Vdot}
-\pft \mcz{V}(w)-2\delta \mcz{V}(w) \geq 0, \quad \forall t \geq 0.
\end{equation} Integrating this expression in time produces
\[\mcz{V}(w) \leq e^{-2\delta t} \mcz{V}(w(0)).\] Using~\eqref{eqn:stab:V} produces
\[\mu \norm{w(\cdot,t)}_{\mcl{L}_2}^2 \leq e^{-2 \delta t} \theta \norm{w(\cdot,0)}_{\mcl{L}_2}^2.\]
Then, taking the square root we conclude that~\eqref{eqn:stab:conclusion} holds with $\kappa = \sqrt{\theta/\mu}$.
\end{proof}

We have reduced the problem of stability analysis of~\eqref{eqn:PDE_complete} to the verification of the integral inequalities in~\eqref{eqn:stab:V_condition} and~\eqref{eqn:stab:Vd_condition}. Thus, the remainder of the work considers the following problem:

\textbf{Problem:} Verify if $\mcz{V}(w)$ is strictly positive, i.e.,~\eqref{eqn:stab:V_condition} is verified on $\mcl{L}_2\left([0,1];\R^\beta \right)$, and $\mcz{V}_d(w)$ is non-negative on $\mcl{B} \subset \mcl{H}^\alpha\left([0,1];\R^\beta \right)$.

We verify~\eqref{eqn:stab:V_condition} by generalizing the methods proposed in~\cite{gahlawat2016convex} and~\cite{valmorbida2016stability}. Namely, we test the positive semi-definiteness  of a polynomial matrix associated to $T_b$ and $\bar{T}$. Since~\eqref{eqn:stab:Vd_condition} requires $\mcz{V}_d(w) \geq 0$ only on the subset $\mcl{B}$ of $\mcl{H}^\alpha\left([0,1];\R^\beta \right)$, it calls for a different formulation than the one adopted to verify~\eqref{eqn:stab:V_condition}. In this case we first construct a set of integral terms $\mcz{S}(w)$ which have the same structure as $\mcz{V}_d(w)$ and satisfy $\mcz{S}(w)=0$, for all $w \in \mcl{B}$. We refer to such expressions as \textit{slack integrals}. We then check for the positive semi-definiteness of a polynomial matrix associated to $\mcz{V}_d(w) + \mcz{S}(w)$ which guarantees $\mcz{V}_d(w) + \mcz{S}(w) \geq 0$, for all $w \in \mcl{H}^\alpha\left([0,1];\R^\beta \right)$. 

The construction of the desired slack integrals is presented in Section~\ref{sec:slack}. These integrals generalize the results presented in~\cite{peet2014lmi},~\cite{peet2009positive} for time-delay systems and in~\cite{valmorbida2016stability} for PDE analysis using~\eqref{eqn:V_int} with $\bar{T}=0$.

\section{POSITIVE/NON-NEGATIVE INTEGRAL INEQUALITIES ON HILBERT SPACES}\label{sec:non_neg_ineq}

In this section we construct a set of positive/non-negative integral inequalities which are parameterized by Positive Semi-Definite (PSD) polynomial matrices. The results provided are a generalization of~\cite[Theorem~1]{gahlawat2016convex}. Throughout this section we will use the polynomial matrix $Y_{q(\alpha,\beta,d)}(x,y)$ defined in~\eqref{eqn:Y}.

We begin by constructing non-negative integral inequalities on $\mcl{H}^\alpha\left([0,1];\R^\beta \right)$, given $\alpha,d \in \N$, that have the same form as $\mcz{V}_d(w)$ in~\eqref{eqn:Vd_int}. Let us define
\begin{align}
\mcz{R}(w) \hspace{-1mm}= \hspace{-1mm}&\label{eqn:Rb_Rbar} \int_0^1 \wb(x)^T R_b(x) \wb(x)dx \hspace{-1mm} + \hspace{-1mm} \int_\Delta \w(x)^T \Gamma \left[\bar{R} \right] \w(y)d\Delta,
\end{align} where $R_b \in \mcl{S}^{\beta(3\alpha+1)}[x]$ and
\begin{align}
\bar{R}(x,y)=& R_{12}(x) Y_{q(\alpha,\beta,d)}(x,y) + Y_{q(\alpha,\beta,d)}(y,x)^T R_{13}(y)^T \notag  \\
& + \int_0^y Y_{q(\alpha,\beta,d)}(z,x)^T R_{33} Y_{q(\alpha,\beta,d)}(z,y)dz \notag \\
& + \int_y^x Y_{q(\alpha,\beta,d)}(z,x)^T R_{23}^T Y_{q(\alpha,\beta,d)}(z,y)dz \notag \\
&\label{eqn:Rbar}  +  \int_x^1 Y_{q(\alpha,\beta,d)}(z,x)^T R_{22} Y_{q(\alpha,\beta,d)}(z,y)dz.
\end{align} for some (polynomial) matrices $R_{12},R_{13} \in \mcl{R}^{\beta(\alpha+1) \times q(\alpha,\beta,d)}[x]$, $R_{22},R_{33} \in \S^{q(\alpha,\beta,d)}$ and $R_{23} \in \R^{q(\alpha,\beta,d) \times q(\alpha,\beta,d)}$, which also define
\begin{align}
&R(x) \hspace{-1mm}= \hspace{-1mm} \left[ \begin{array}{c|c|c}
R_{b}(x) & \bmat{ R_{12}(x) \\ 0_{2\beta\alpha,q(\alpha,\beta,d)}}  & \bmat{ R_{13}(x) \\ 0_{2\beta\alpha,q(\alpha,\beta,d)} }\\ \hline
 \bmat{ R_{12}(x) \\ 0_{2\beta\alpha,q(\alpha,\beta,d)}}^T & R_{22} & R_{23} \\ \hline
 \bmat{ R_{13}(x) \\ 0_{2\beta\alpha,q(\alpha,\beta,d)} }^T & R_{23}^T & R_{33} 
\end{array}\right] \notag \\
&\label{eqn:R} \qquad \qquad \qquad \qquad \qquad \qquad \quad  \in \mcl{S}^{\beta(3\alpha+1)+2q(\alpha,\beta,d)}[x].
\end{align}

The following theorem states the conditions on the polynomial matrix $R(x)$ whose submatrices define  $R_b$ and $\bar{R}$ such that~\eqref{eqn:Rb_Rbar} is non-negative on $\mcl{H}^\alpha\left([0,1];\R^\beta \right)$.

\begin{theorem}\label{thm:non_neg}
Given (polynomial) matrices which define $\mcz{R}(w)$ in~\eqref{eqn:Rb_Rbar},
if
\begin{equation}\label{eqn:R_condition}
R(x) \succeq 0, \quad \forall x \in [0,1],
\end{equation} where $R(x)$ is defined in~\eqref{eqn:R}, then
\begin{equation}\label{eqn:R_conclusion}
\mcz{R}(w) \geq 0, \quad \forall w \in \mcl{H}^\alpha\left([0,1];\R^\beta \right).
\end{equation}
\end{theorem}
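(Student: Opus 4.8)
The plan is to realize $\mcz{R}(w)$ as the integral over $[0,1]$ of a pointwise quadratic form in an augmented vector, so that \eqref{eqn:R_condition} forces the integrand, and hence the integral, to be non-negative. For $w \in \mcl{H}^\alpha\left([0,1];\R^\beta \right)$ I would introduce the two auxiliary functions (abbreviating $Y_{q(\alpha,\beta,d)}$ as $Y$)
\[
\xi_1(x)=\igzx Y(x,y)\w(y)\,dy, \qquad \xi_2(x)=\igxo Y(x,y)\w(y)\,dy,
\]
and set $\eta(x)=\bmat{\wb(x)^T & \xi_1(x)^T & \xi_2(x)^T}^T$. Because $Y$ is polynomial (hence bounded on $[0,1]^2$) and $\w=w_\alpha \in \mcl{L}_2$, the vector $\eta$ belongs to $\mcl{L}_2$ and every integral below converges; moreover $\eta$ takes values in $\R^{\beta(3\alpha+1)+2q(\alpha,\beta,d)}$, exactly the size of $R(x)$. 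The target identity is
\[
\mcz{R}(w)=\int_0^1 \eta(x)^T R(x)\,\eta(x)\,dx,
\]
after which \eqref{eqn:R_condition} gives $\eta(x)^T R(x)\eta(x)\ge 0$ for a.e.\ $x\in[0,1]$ and \eqref{eqn:R_conclusion} is immediate.

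To establish the identity I would expand $\eta^T R\eta$ block by block using \eqref{eqn:R}. The $(1,1)$ block reproduces $\int_0^1 \wb(x)^T R_b(x)\wb(x)\,dx$ verbatim, i.e.\ the first term of \eqref{eqn:Rb_Rbar}. Since the $(1,2)$ and $(1,3)$ blocks are $\bmat{R_{12}(x) \\ 0_{2\beta\alpha,q(\alpha,\beta,d)}}$ and $\bmat{R_{13}(x) \\ 0_{2\beta\alpha,q(\alpha,\beta,d)}}$, the zero padding annihilates the boundary component $w_\alpha^b$ of $\wb$, so the cross terms with the auxiliary variables involve only $\w(x)=w_\alpha(x)$ and yield $2\int_0^1 \w^T R_{12}\xi_1\,dx$ and $2\int_0^1 \w^T R_{13}\xi_2\,dx$; the lower-right blocks contribute $\int_0^1\left(\xi_1^T R_{22}\xi_1+2\xi_1^T R_{23}\xi_2+\xi_2^T R_{33}\xi_2\right)dx$. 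It then remains to check that, after inserting the definitions of $\xi_1,\xi_2$, these terms reassemble into $\int_\Delta \w(x)^T\Gamma\left[\bar R\right]\w(y)\,d\Delta$ with $\bar R$ given by \eqref{eqn:Rbar}.

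This reassembly is the substantive step, and the place where I expect the bookkeeping to be delicate; it rests entirely on Fubini's theorem. The linear term $\w(x)^T R_{12}(x)\xi_1(x)=\w(x)^T R_{12}(x)\igzx Y(x,y)\w(y)\,dy$ is at once the $R_{12}(x)Y(x,y)$ piece of $\bar R$ on $\{y\le x\}$; the matching contribution on $\{y>x\}$ is produced by the transpose-and-reflect in the $\Gamma[\cdot]$ map, which is exactly why the off-diagonal contributions carry the factor $2$. For the genuinely computational quadratic blocks I would, for instance, write
\[
\int_0^1\xi_1(x)^T R_{22}\xi_1(x)\,dx=\int_0^1\igzx\igzx \w(y)^T Y(x,y)^T R_{22}\,Y(x,y')\w(y')\,dy'\,dy\,dx,
\]
and exchange the order of integration so that the monomial argument (originally the outer variable, renamed $z$) is integrated innermost; since for fixed $z$ both state arguments must lie in $[0,z]$, this recovers the $R_{22}$ piece $\int_\Delta\w(x)^T\left(\igxo Y(z,x)^T R_{22}Y(z,y)\,dz\right)\w(y)\,d\Delta$, explaining the limit $\igxo$ in \eqref{eqn:Rbar}. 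The $R_{33}$ block produces the limit $\int_0^y(\cdot)$ by the mirror-image computation, and the $R_{23}$ cross block, in which one state argument lies below and the other above the monomial argument, produces the intermediate limit $\int_y^x(\cdot)$ together with the transpose $R_{23}^T$, matching \eqref{eqn:Rbar}. Verifying that the three $z$-ranges partition $[0,1]$ correctly and that the halves $\{y\le x\}$ and $\{y>x\}$ of $\Delta$ combine consistently under $\Gamma[\cdot]$ completes the proof; there is no analytic difficulty beyond Fubini, whose applicability is secured by the boundedness of $Y$ and $\w\in\mcl{L}_2$.
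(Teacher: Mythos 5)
Your proposal is correct and follows essentially the same route as the paper: the augmented vector $\eta$ is exactly the paper's $f(x)=\bigl[\wb(x)^T,\ \int_0^x Y_q(x,y)\w(y)^T dy,\ \int_x^1 Y_q(x,y)\w(y)^T dy\bigr]^T$, and the identity $\mcz{R}(w)=\int_0^1 f^T R f\,dx$ plus positive semi-definiteness of $R$ is precisely the paper's argument. The only cosmetic difference is that you carry out the Fubini reindexings directly, whereas the paper packages them into its auxiliary Lemmas on switching and reordering integrals; your limits of integration and the placement of $R_{23}^T$ all check out against \eqref{eqn:Rbar}.
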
 
\begin{proof}
Throughout this proof, for notational brevity, we write $q$ in place of $q(\alpha,\beta,d)$. 

If we define
\[f(x)=\bmat{
\wb(x) \\ 
\int_0^x Y_{q}(x,y)\w(y)dy \\
\int_x^1 Y_q(x,y)\w(y)dy
}, \quad w \in \mcl{H}^\alpha\left([0,1];\R^\beta\right),\]  then
\begin{align}\label{eqn:non_neg_rep}
&\int_0^1
f(x)^T R(x) f(x)dx \geq 0,\quad \forall w \in \mcl{H}^\alpha\left([0,1];\R^\beta\right),
\end{align} since $R(x) \succeq 0$, for all $x \in [0,1]$. Simplifying the expression we obtain
\begin{align}
&\int_0^1
f(x)^T R(x) f(x)dx \notag \\
&\label{eqn:non_neg:1} = \int_0^1 \wb(x)^T R_b(x) \wb(x)dx + \sum_{i=1}^3 \Theta_i,
\end{align} where
\begin{align*}
\Theta_1=& 2 \int_0^1 \int_0^x \w(x)^T R_{12}(x)Y_q(x,y)\w(y)dy dx \\
& + 2 \int_0^1 \int_x^1 \w(x)^T R_{13}(x)Y_q(x,y)\w(y)dy dx,\\
\Theta_2=& \int_0^1 \left(\int_0^x Y_q(x,y) \w(y)dy  \right)^T \\
& \times \left(\int_0^x R_{22}Y_q(x,y)\w(y)dy \right. \\
& \left. \qquad \qquad \qquad  + \int_x^1 R_{23}Y_q(x,y)\w(y)dy   \right)dx,\\
\Theta_3=& \int_0^1 \left(\int_x^1 Y_q(x,y) \w(y)dy  \right)^T \\
& \times \left(\int_0^x R_{23}^T Y_q(x,y)\w(y)dy \right. \\
& \left. \qquad \qquad \qquad  + \int_x^1 R_{33}Y_q(x,y)\w(y)dy   \right)dx.
\end{align*} Applying Lemma~\ref{lem:switch} to $\Theta_1$ with $K_1(x,y)=2R_{12}(x)Y_q(x,y)$ and $K_2(x,y)=2R_{13}(x)Y_q(x,y)$ produces
\begin{align}
\Theta_1 =&  \int_\Delta \w(x)^T \Gamma \left[R_{12}(x)Y_q(x,y) \right] \w(y) d\Delta \notag \\
&\label{eqn:non_neg:theta1}+\int_\Delta \w(x)^T \Gamma \left[ Y_q(y,x)^T R_{13}(y)^T \right] \w(y)d\Delta,
\end{align} Applying Lemma~\ref{lem:order} to $\Theta_2$ with
\begin{align*}
F_1(x,y)=&Y_q(x,y), \quad F_2(x,y) = 0_{q,\beta(\alpha+1)}, \quad v(y)=\w(y),\\
G_1(x,y)=& R_{22}Y_q(x,y), \quad G_2(x,y)=R_{23}Y_q(x,y),
\end{align*} produces
\begin{align}
\Theta_2 = & \int_\Delta \w(x)^T \Gamma \left[ \int_y^x Y_q(z,x)^T \frac{R_{23}^T}{2}Y_q(z,y)dz\right] \w(y) d\Delta, \notag \\
&\label{eqn:non_neg:theta2} +  \int_\Delta \w(x)^T \Gamma \left[ \int_x^1 Y_q(z,x)^T R_{22}Y_q(z,y)dz\right] \w(y) d\Delta.
\end{align} Similarly, applying Lemma~\ref{lem:order} to $\Theta_3$ with
\begin{align*}
F_1(x,y)=&0_{q,\beta(\alpha+1)}, \quad F_2(x,y) = Y_q(x,y), \quad v(y)=\w(y),\\
G_1(x,y)=& R_{23}^T Y_q(x,y), \quad G_2(x,y)=R_{33}Y_q(x,y),
\end{align*} produces
\begin{align}
\Theta_3 = & \int_\Delta \w(x)^T \Gamma \left[ \int_0^y Y_q(z,x)^T R_{33}^TY_q(z,y)dz\right] \w(y) d\Delta, \notag \\
&\label{eqn:non_neg:theta3} +  \int_\Delta \w(x)^T \Gamma \left[ \int_y^x Y_q(z,x)^T \frac{R_{23}^T}{2}Y_q(z,y)dz\right] \w(y) d\Delta.
\end{align} Substituting~\eqref{eqn:non_neg:theta1}-\eqref{eqn:non_neg:theta3} into~\eqref{eqn:non_neg:1} produces
\begin{align*}
\int_0^1 f(x)^T R(x)f(x)dx = & \int_0^1 \wb(x)^T R_b(x) \wb(x)dx \\
&+ \int_\Delta \w(x)^T \Gamma \left[ \bar{R}\right] \w(y)d\Delta
\end{align*} Then,~\eqref{eqn:non_neg_rep} completes the proof.
\end{proof} 

We now present a corollary in which we construct strictly positive integral inequalities on $\mcl{L}_2\left([0,1];\R^\beta \right)$ that have the same form as $\mcz{V}(w)$ in~\eqref{eqn:V_int}. This corollary corresponds to setting $\alpha=0$ in Theorem~\ref{thm:non_neg}.  

Given $w \in \mcl{L}_2\left([0,1];\R^\beta \right)$ and $\beta,d \in \N$, let us define
\begin{align}
\mcz{T}(w) \hspace{-1mm} =  \hspace{-1mm} &\label{eqn:Tb_Tbar} \int_0^1 w(x)^T T_b(x) w(x)dx  \hspace{-1mm} + \hspace{-1mm} \int_\Delta w(x)^T \Gamma \left[\bar{T} \right] w(y)d\Delta,
\end{align} where $T_{b} \in \mcl{S}^{\beta}[x]$ and
\begin{align}
\bar{T}(x,y)=& T_{12}(x) Y_{q(0,\beta,d)}(x,y) + Y_{q(0,\beta,d)}(y,x)^T T_{13}(y)^T \notag  \\
& + \int_0^y Y_{q(0,\beta,d)}(z,x)^T T_{33} Y_{q(0,\beta,d)}(z,y)dz \notag \\
& + \int_y^x Y_{q(0,\beta,d)}(z,x)^T T_{23}^T Y_{q(0,\beta,d)}(z,y)dz \notag \\
&\label{eqn:Tbar}  +  \int_x^1 Y_{q(0,\beta,d)}(z,x)^T T_{22} Y_{q(0,\beta,d)}(z,y)dz.
\end{align} for any $T_{12},T_{13} \in \mcl{R}^{\beta \times q(0,\beta,d)}[x]$, $T_{22},T_{33} \in \S^{q(0,\beta,d)}$ and $T_{23} \in \R^{q(0,\beta,d) \times q(0,\beta,d)}$, which also define
\begin{equation}\label{eqn:T}
T(x)=\bmat{T_{b}(x) &  T_{12}(x) & T_{13}(x) \\ 
  T_{12}(x)^T  & T_{22} & T_{23} \\ 
 T_{13}(x)^T  & T_{23}^T & T_{33}} \in \mcl{S}^{\beta+2q(0,\beta,d)}[x].
\end{equation}

\begin{corollary}\label{cor:pos}
Given (polynomial) matrices which define $\mcz{T}(w)$ in~\eqref{eqn:Tb_Tbar},
 if there exists a positive scalar $\epsilon$ such that
\begin{equation}\label{eqn:T_condition}
T(x) -\bmat{\epsilon I_\beta & 0_{\beta,2q(0,\beta,d)} \\ 0_{2q(0,\beta,d),\beta} & 0_{2q(0,\beta,d)}} \succeq 0, \quad \forall x \in [0,1],
\end{equation} where $T(x)$ is defined in~\eqref{eqn:T}, then,
\begin{equation}\label{eqn:T_conclusion}
\mcz{T}(w) \geq \epsilon \norm{w}^2_{\mcl{L}_2}, \quad \forall w \in \mcl{L}_2\left([0,1];\R^\beta \right).
\end{equation}
\end{corollary}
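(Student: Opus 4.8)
The plan is to read this corollary as the strict, $\epsilon$-weighted specialization of Theorem~\ref{thm:non_neg} at $\alpha=0$, and to reduce the strict inequality to the non-negativity statement already proved by absorbing the $\epsilon I_\beta$ shift into the local quadratic term. Throughout I write $q$ for $q(0,\beta,d)$. First I would verify that setting $\alpha=0$ collapses all the objects of Theorem~\ref{thm:non_neg} onto those appearing in the corollary. Indeed, with $\alpha=0$ the boundary vector $w_\alpha^b$ lives in $\R^{0}$ and is empty, so $\wb(x)=\w(x)=w(x)$ and $\beta(3\alpha+1)=\beta$; consequently $R_b$ reduces to $T_b\in\mcl{S}^{\beta}[x]$, the augmented blocks $\bmat{R_{12}(x)\\ 0_{2\beta\alpha,q}}$ and $\bmat{R_{13}(x)\\ 0_{2\beta\alpha,q}}$ in~\eqref{eqn:R} lose their (empty) zero rows and become $T_{12}(x)$ and $T_{13}(x)$, the kernel $\bar{R}$ in~\eqref{eqn:Rbar} becomes exactly $\bar{T}$ in~\eqref{eqn:Tbar}, and $\mcz{R}(w)$ in~\eqref{eqn:Rb_Rbar} becomes $\mcz{T}(w)$ in~\eqref{eqn:Tb_Tbar}. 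In particular the matrix $R(x)$ in~\eqref{eqn:R} is precisely $T(x)$ in~\eqref{eqn:T}, and $\mcl{H}^0=\mcl{L}_2$.

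With this dictionary in hand, I would introduce the shifted data $\tilde{T}_b(x)=T_b(x)-\epsilon I_\beta$, keeping $T_{12},T_{13},T_{22},T_{23},T_{33}$ unchanged, and let $\tilde{\mcz{T}}$ denote the functional built from $\tilde{T}_b$ and the same $\bar{T}$. Since the off-diagonal and $\Delta$-integral parts are untouched, subtracting $\epsilon I_\beta$ affects only the local term, giving the identity
\begin{equation*}
\tilde{\mcz{T}}(w)=\mcz{T}(w)-\epsilon\int_0^1 w(x)^T w(x)\,dx=\mcz{T}(w)-\epsilon\norm{w}^2_{\mcl{L}_2}.
\end{equation*}
The matrix associated with $\tilde{\mcz{T}}$ is exactly the left-hand side of the hypothesis~\eqref{eqn:T_condition}, namely $T(x)-\bmat{\epsilon I_\beta & 0_{\beta,2q}\\ 0_{2q,\beta} & 0_{2q}}$, which is assumed PSD on $[0,1]$. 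Applying Theorem~\ref{thm:non_neg} with $\alpha=0$ to this shifted matrix then yields $\tilde{\mcz{T}}(w)\geq 0$ for all $w\in\mcl{L}_2\left([0,1];\R^\beta\right)$, and rearranging the identity above gives the claimed bound $\mcz{T}(w)\geq\epsilon\norm{w}^2_{\mcl{L}_2}$.

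No step is genuinely difficult here; the only thing demanding care is the bookkeeping of the $\alpha=0$ degeneration, specifically checking that the vanishing of the $2\beta\alpha$ boundary components removes exactly the zero-padding in the $R_{12},R_{13}$ blocks of~\eqref{eqn:R} so that $R(x)$ truly coincides with $T(x)$, with no leftover boundary integrals surviving in $\mcz{R}(w)$. Once that identification is made, the corollary is immediate from Theorem~\ref{thm:non_neg}, and the existence of $\epsilon$ is supplied directly by the hypothesis.
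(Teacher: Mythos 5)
Your proof is correct and follows essentially the same route as the paper: the paper likewise represents $\mcz{T}(w)-\epsilon\norm{w}^2_{\mcl{L}_2}$ as $\int_0^1 f(x)^T\bigl(T(x)-\bmat{\epsilon I_\beta & 0 \\ 0 & 0}\bigr)f(x)\,dx$ with the same augmented vector $f$ and concludes from the pointwise positive semi-definiteness hypothesis. Packaging the $\epsilon$-shift as an application of Theorem~\ref{thm:non_neg} to the data $\tilde{T}_b=T_b-\epsilon I_\beta$ (after the $\alpha=0$ bookkeeping you carry out) is only a cosmetic repackaging of the paper's ``following the same steps'' argument.
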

\begin{proof}
Following the same steps as for the proof of Theorem~\ref{thm:non_neg}, it can be shown that
\begin{align*}
&\mcz{T}(w)-\epsilon \norm{w}_{\mcl{L}_2} \\
&= \int_0^1 f(x)^T \left(T(x) -\bmat{\epsilon I_\beta & 0_{\beta,2q(0,\beta,d)} \\ 0_{2q(0,\beta,d),\beta} & 0_{2q(0,\beta,d)}}   \right) f(x)dx,
\end{align*} where
\[f(x) = \bmat{w(x) \\ \int_0^x Y_{q(0,\beta,d)}(x,y) w(y)dy \\ \int_x^1 Y_{q(0,\beta,d)}(x,y) w(y)dy}, \quad w \in \mcl{L}_2\left([0,1];\R^\beta\right).\] Then,~\eqref{eqn:T_conclusion} holds since~\eqref{eqn:T_condition} holds.
\end{proof} 

\section{SLACK INTEGRALS}\label{sec:slack}

In Section~\ref{sec:pro_statement} we cast the stability of~\eqref{eqn:PDE_complete} as a test of positivity and non-negativity of integral inequalities in~\eqref{eqn:stab:V_condition} and~\eqref{eqn:stab:Vd_condition}. In Section~\ref{sec:main} we will show that Corollary~\ref{cor:pos} is sufficient to verify~\eqref{eqn:stab:V_condition}. Theorem~\ref{thm:non_neg} is too conservative to verify~\eqref{eqn:stab:Vd_condition} because it enforces $\mcz{V}_d(w) \geq 0$ on the entire space $\mcl{H}^\alpha\left([0,1];\R^\beta \right)$, while we are only interested in non-negativity over the subset $\mcl{B} \subset \mcl{H}^\alpha\left([0,1];\R^\beta \right)$.
 As discussed in Section~\ref{sec:pro_statement}, however, to solve this problem we  construct slack integrals $\mcz{S}(w)$ which we defined as integral expressions with the same structure as $\mcz{V}_d(w)$ and satisfies
\begin{equation}\label{eqn:def:space}
\mcz{S}(w)=0, \quad \forall w \in \mcl{B}.
\end{equation} Then, we may use Theorem~\ref{thm:non_neg} to test if
\[\mcz{V}_d(w)+\mcz{S}(w) \geq 0, \quad \forall w \in \mcl{H}^\alpha\left([0,1];\R^\beta \right),\] which, if true, would imply that
$\mcz{V}_d(w) \geq 0$, $\forall w \in \mcl{B}$, owing to~\eqref{eqn:def:space}.

We will construct slack integrals using quadratic forms of the Fundamental Theorem of Calculus (FTC) and Green's theorem.
\begin{lemma}[FTC quadratic form]\label{lem:FTC}
For any $K_1 \in  \mcl{R}^{\beta\alpha \times \beta\alpha}[x]$ and $K_2 \in \mcl{R}^{\beta \alpha \times 2 \beta \alpha}[x]$ the following identity holds
\begin{equation}\label{eqn:FTC_quad}
\int_0^1  \wb(x)^T He \left( K_b(x) \right)  \wb(x) dx=0, \quad \forall w \in \mcl{H}^\alpha\left([0,1];\R^\beta \right),
\end{equation} where
\begin{align*}
K_b= & \bmat{K_{b1}(x) & K_{b2}(x) \\ 
 0_{2\beta\alpha,\beta(\alpha+1)} & K_{b3} 
}, \\
K_{b1}(x)=& N^T \pfx K_1(x) N +N^T K_1(x) N_\partial + N_\partial^T K_1(x) N,\\
K_{b2}(x)=& N_\partial^T K_2(x) + N^T \pfx K_2(x),\\
K_{b3}=& N_0^T K_1(0) N_0-N_1^T K_1(1) N_1 + N_0 K_2(0) - N_1 K_2(1).
\end{align*}
\end{lemma}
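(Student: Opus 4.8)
The plan is to recognize the claimed identity as a quadratic-form packaging of the Fundamental Theorem of Calculus. The key observation is that the integrand $\wb(x)^T He(K_b(x)) \wb(x)$ should be the derivative of something that integrates to zero by FTC, namely a quadratic form in $w_{\alpha-1}$ evaluated at the boundaries. So the first step I would take is to guess the antiderivative. Given the block structure of $K_b$, the natural candidate is $\Phi(x) = w_{\alpha-1}(x)^T K_1(x) w_{\alpha-1}(x) + 2\, w_{\alpha-1}(x)^T K_2(x) w_\alpha^b$, and I would aim to show that
\begin{equation*}
\pfx \Phi(x) = \wb(x)^T He\!\left(K_b(x)\right)\wb(x),
\end{equation*}
after which integrating over $[0,1]$ gives $\Phi(1)-\Phi(0)$, which I then check equals zero (the $K_{b3}$ block is precisely designed to cancel these boundary contributions).

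The main computation is differentiating $\Phi$. Using the product rule and the identities supplied in the Notation section, $\pfx w_{\alpha-1}(x) = N_\partial w_\alpha(x)$ and $w_{\alpha-1}(x) = N w_\alpha(x)$, while $w_\alpha^b$ is constant in $x$. Differentiating the first term of $\Phi$ produces $w_{\alpha-1}^T (\pfx K_1) w_{\alpha-1} + 2\, (\pfx w_{\alpha-1})^T K_1 w_{\alpha-1}$; substituting the two identities turns this into $w_\alpha^T\!\left(N^T \pfx K_1 N + N^T K_1 N_\partial + N_\partial^T K_1 N\right) w_\alpha$ once symmetrized, which is exactly $w_\alpha^T He(K_{b1}) w_\alpha$ — matching the $(1,1)$ block. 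Differentiating the cross term $2\, w_{\alpha-1}^T K_2 w_\alpha^b$ gives $2\,(\pfx w_{\alpha-1})^T K_2 w_\alpha^b + 2\, w_{\alpha-1}^T (\pfx K_2) w_\alpha^b$, and substituting the identities yields $2\, w_\alpha^T\!\left(N_\partial^T K_2 + N^T \pfx K_2\right) w_\alpha^b$, matching the $He$ of the off-diagonal block $K_{b2}$ against $\wb = [w_\alpha^T, (w_\alpha^b)^T]^T$.

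For the boundary evaluation I would use $w_{\alpha-1}(0) = N_0 w_\alpha^b$ and $w_{\alpha-1}(1) = N_1 w_\alpha^b$ to write $\Phi(1)-\Phi(0)$ entirely in terms of the constant vector $w_\alpha^b$:
\begin{equation*}
\Phi(1)-\Phi(0) = (w_\alpha^b)^T\!\left(N_1^T K_1(1) N_1 - N_0^T K_1(0) N_0 + 2 N_1^T K_2(1) - 2 N_0^T K_2(0)\right) w_\alpha^b.
\end{equation*}
This is a $\wb$-quadratic form supported on the $w_\alpha^b$ block, and it should coincide with the $He(K_{b3})$ contribution (up to the sign conventions built into $K_{b3}$, whose definition contains exactly the terms $N_0^T K_1(0)N_0 - N_1^T K_1(1)N_1 + N_0 K_2(0) - N_1 K_2(1)$). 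The integral of $\pfx\Phi$ over $[0,1]$ equals $\Phi(1)-\Phi(0)$, so the only way the stated identity can hold for every admissible $w$ is that these boundary terms are themselves cancelled by the $He(K_{b3})$ block inside the integrand — i.e. the lemma is asserting that the integral of the full symmetrized form vanishes, the interior part being a total derivative whose boundary values are absorbed into $K_{b3}$.

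The step I expect to be the main obstacle is bookkeeping the symmetrization $He$ consistently across the block decomposition, in particular reconciling the sign and transpose conventions in $K_{b3}$ (note the asymmetry between the $K_1$ terms, which appear with an $N_0^T \cdots N_0$ sandwich, and the $K_2$ terms, written as $N_0 K_2(0)$ without an explicit transpose) with the boundary evaluation above. I would verify dimensional consistency first ($K_1$ is $\beta\alpha \times \beta\alpha$ acting on $w_{\alpha-1} \in \R^{\beta\alpha}$, $K_2$ is $\beta\alpha \times 2\beta\alpha$ pairing $w_{\alpha-1}$ with $w_\alpha^b \in \R^{2\beta\alpha}$), and then confirm that the $He$ operation distributes over the blocks exactly so that the total-derivative structure forces the integral to zero. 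Once the antiderivative $\Phi$ and the matching of each block are confirmed, the result follows immediately from $\int_0^1 \pfx\Phi\,dx = \Phi(1)-\Phi(0)$ together with the boundary identities.
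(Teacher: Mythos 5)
Your approach is exactly the paper's: the proof there consists of expanding $\int_0^1 \tfrac{d}{dx}g(x)\,dx-\bigl(g(1)-g(0)\bigr)=0$ with an antiderivative quadratic in $w_{\alpha-1}$ and $w_\alpha^b$, using the projection identities $\pfx w_{\alpha-1}=N_\partial w_\alpha$, $w_{\alpha-1}=Nw_\alpha$, $w_{\alpha-1}(0)=N_0w_\alpha^b$, $w_{\alpha-1}(1)=N_1w_\alpha^b$, and absorbing the constant boundary term into the $K_{b3}$ block. The one correction you need is the factor of $2$ in your candidate $\Phi$: the paper takes $g(x)=w_{\alpha-1}(x)^TK_1(x)w_{\alpha-1}(x)+w_{\alpha-1}(x)^TK_2(x)w_\alpha^b$ \emph{without} the $2$ on the cross term. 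Since $\wb^THe(K_b)\wb=\wb^TK_b\wb$ contains the off-diagonal contribution $w_\alpha^TK_{b2}w_\alpha^b$ exactly once (the lower-left block of $K_b$ is zero, so symmetrization does not double it), and $K_{b3}$ contains $K_2(0),K_2(1)$ without a factor $2$, your $\Phi$ produces a cross term and boundary $K_2$-terms that are twice what the lemma's blocks supply; because $K_1$ and $K_2$ are independent, this cannot be repaired by rescaling, only by dropping the $2$. With that change every block matches as you describe. Your suspicion about the transpose convention in $K_{b3}$ is also well founded: $N_0K_2(0)$ and $N_1K_2(1)$ are dimensionally inconsistent as written ($N_0\in\N^{\beta\alpha\times2\beta\alpha}$ cannot left-multiply $K_2\in\R^{\beta\alpha\times2\beta\alpha}$) and must be read as $N_0^TK_2(0)$ and $N_1^TK_2(1)$, which is exactly what the boundary evaluation of $g$ yields.
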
 
\begin{proof}
The identity~\eqref{eqn:FTC_quad} is established by expanding
\[\int_0^1 \frac{d}{dx}g(x)dx - \left(g(1)-g(0) \right)=0,\] with
\begin{align*}
g(x)=& w_{\alpha-1}(x)^T K_1(x) w_{\alpha-1}(x) + w_{\alpha-1}(x)^T K_2(x)  w_\alpha^b.
\end{align*}
\end{proof}

Next, we present the quadratic form of the Green's theorem. The proof of the following lemma is provided in the Appendix.
\begin{lemma}[Green's theorem quadratic form]\label{lem:Greens}
For any $H_1,H_2 \in \mcl{R}^{\beta\alpha \times \beta\alpha}[(x,y)]$, the following identity holds
\begin{align*}
&\int_0^1 \wb(x)^T He \left( H_b(x) \right) \wb(x) dx \\
&  +\int_\Delta \w(x)^T \Gamma \left[ \bar{H} \right] \w(y)d\Delta  =0, \quad \forall w \in \mcl{H}^\alpha\left([0,1];\R^\beta \right),
\end{align*} where
\begin{align*}
H_b(x)=&
\bmat{
H_{b1}(x)  
&
H_{b2}(x)  \\ 
H_{b3}(x) 
&
0_{2 \beta \alpha}},\\
H_{b1}(x)=&-N^T \left(H_1(x,x)+H_2(x,x) \right) N,\\
H_{b2}(x)=&N^T H_1(x,0)N_0,\quad H_{b3}(x)=N_1^T H_2(1,x) N,\\
\bar{H}(x,y)=& \hlf  N^T \left(\pfy H_1(x,y) -\pfx H_2(x,y)  \right) N \\
&+\hlf \left( N^T H_1(x,y) N_\partial - N_\partial^T H_2(x,y) N  \right).
\end{align*}
\end{lemma}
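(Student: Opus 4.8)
The plan is to mirror the proof of Lemma~\ref{lem:FTC}, but with the one–dimensional fundamental theorem of calculus replaced by Green's theorem on the triangle $\om=\{(x,y):0\le y\le x\le 1\}$. The two scalar functions I would integrate are $\phi_1(x,y)=w_{\alpha-1}(x)^T H_1(x,y)\,w_{\alpha-1}(y)$ and $\phi_2(x,y)=w_{\alpha-1}(x)^T H_2(x,y)\,w_{\alpha-1}(y)$; these are well defined and admit the derivatives used below because $w\in\mcl{H}^\alpha([0,1];\R^\beta)$ makes $w,\pfx w,\dots,\pfx^{\alpha-1}w$ absolutely continuous. Applying Green's theorem to the vector field $(P,Q)=(-\phi_1,-\phi_2)$ gives the master identity $\int_{\om}\left(\pfy\phi_1-\pfx\phi_2\right)dx\,dy=\oint_{\p\om}\left(-\phi_1\,dx-\phi_2\,dy\right)$, and the whole argument reduces to showing that its left side reproduces the $\Delta$–integral in the claim while its right side reproduces $-\int_0^1\wb^T He(H_b)\wb\,dx$.

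First I would rewrite the kernel $\bar H$ using the algebraic relations $w_{\alpha-1}(x)=N\w(x)$ and $\pfx w_{\alpha-1}(x)=N_\partial\w(x)$. Substituting these into the definition of $\bar H$ shows, after a short computation, that $\w(x)^T\bar H(x,y)\,\w(y)=\hlf\left(\pfy\phi_1(x,y)-\pfx\phi_2(x,y)\right)$: the terms flanked by $N^T(\cdot)N$ supply the explicit $\pfy H_1$ and $\pfx H_2$ derivatives, while the two terms containing $N_\partial$ supply exactly the $H_1\,\pfy w_{\alpha-1}$ and $(\pfx w_{\alpha-1})^T H_2$ pieces needed to complete the total derivatives of $\phi_1$ and $\phi_2$. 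Next I would exploit the symmetry built into the map $\Gamma$: relabelling the dummy variables in the second term of its defining formula shows that for any kernel $K$ one has $\int_\Delta v(x)^T\Gamma[K]v(y)\,d\Delta=2\int_0^1\!\int_0^x v(x)^T K(x,y)v(y)\,dy\,dx$, i.e.\ twice the integral over $\om$. Combined with the previous step this gives $\int_\Delta\w(x)^T\Gamma[\bar H]\w(y)\,d\Delta=2\int_{\om}\w(x)^T\bar H\,\w(y)\,dx\,dy=\int_{\om}\left(\pfy\phi_1-\pfx\phi_2\right)dx\,dy$, which is precisely the double integral produced by Green's theorem.

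It then remains to evaluate $\oint_{\p\om}(-\phi_1\,dx-\phi_2\,dy)$ over the three edges of $\om$, oriented counterclockwise: the bottom edge $y=0$, the right edge $x=1$, and the diagonal $y=x$. On the diagonal ($dx=dy$) the two contributions combine into $\int_0^1 w_{\alpha-1}(x)^T\left(H_1(x,x)+H_2(x,x)\right)w_{\alpha-1}(x)\,dx$, which via $w_{\alpha-1}=N\w$ becomes the $H_{b1}$ block; on the bottom edge $\phi_2$ drops out and $w_{\alpha-1}(0)=N_0\,w_\alpha^b$ produces the $H_{b2}$ block; on the right edge $\phi_1$ drops out and $w_{\alpha-1}(1)=N_1\,w_\alpha^b$ produces $H_{b3}$. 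Collecting the three terms and using that $\wb^T He(H_b)\wb=\wb^T H_b\wb$ for scalar quadratic forms gives $\oint_{\p\om}(-\phi_1\,dx-\phi_2\,dy)=-\int_0^1\wb(x)^T He(H_b(x))\wb(x)\,dx$; together with the identity for the $\Delta$–integral this is exactly the claimed identity.

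I expect the only real difficulty to be bookkeeping: keeping the orientation and signs in Green's theorem straight, matching each boundary edge to the correct block $H_{b1},H_{b2},H_{b3}$ with the correct sign, and tracking the factor $\hlf$ in $\bar H$ against the factor $2$ contributed by the $\Gamma$–symmetry so that they cancel cleanly. A secondary point deserving an explicit remark is regularity: the passage from the double to the boundary integral is an integration by parts in disguise, so I would note that it is legitimate for the merely absolutely continuous $w_{\alpha-1}$ guaranteed by $w\in\mcl{H}^\alpha([0,1];\R^\beta)$, rather than requiring a smooth integrand.
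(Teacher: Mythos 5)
Your proposal is correct and follows essentially the same route as the paper: the same vector field $(\phi_1,\phi_2)$ built from $H_1,H_2$ and $w_{\alpha-1}$, Green's theorem on $\overline{\Omega}$, identification of the three boundary edges with the blocks $H_{b1},H_{b2},H_{b3}$, and the doubling identity for $\Gamma$ (which the paper obtains by invoking its Lemma on symmetrization with $K_1=2\bar H$, $K_2=0$, and which you rederive inline). The only cosmetic difference is your sign convention from applying Green's theorem to $(-\phi_1,-\phi_2)$ rather than writing the circulation-plus-area identity directly.
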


In the following lemma we formulate an integral equation that holds on the set $\mcl{B}$ given in~\eqref{eqn:BC}. The proof of the following lemma is provided in the Appendix.
\begin{lemma}\label{lem:boundary}
Given $F \in \R^{\beta\alpha \times 2 \beta\alpha}$ the following identity holds true for all $w \in \mcl{B}$:
\begin{align*}
&\int_0^1 \wb(x)^T He \left( B_b(x) \right) \wb(x)dx = 0, 
\end{align*} where
\begin{align*}
B_b(x) = & \bmat{0_{\beta(\alpha+1)} & B_1(x)F \\ 0_{2\beta\alpha,\beta(\alpha+1)} & B_2(x) F},
\end{align*} for any  $B_1 \in \mcl{R}^{\beta(\alpha+1) \times \beta\alpha}[x]$ and $B_2 \in \mcl{R}^{2\beta\alpha \times \beta\alpha}[x]$.
\end{lemma}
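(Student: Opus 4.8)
The plan is to prove the identity by showing that the integrand itself vanishes identically on $[0,1]$ for every $w \in \mcl{B}$, so that no genuine integration argument is needed. The first observation I would use is that $\wb(x)^T He(B_b(x)) \wb(x)$ is a scalar, hence equal to its own transpose; consequently only the symmetric part of $B_b(x)$ contributes, and $\wb(x)^T He(B_b(x)) \wb(x) = \wb(x)^T B_b(x) \wb(x)$. This lets me discard the symmetrization and work directly with the block matrix $B_b(x)$ as given.

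Next I would partition $\wb(x)$ compatibly with the block structure of $B_b(x)$. By the definition in Section~\ref{sec:notation}, $\wb(x) = \bmat{w_\alpha(x)^T, & (w_\alpha^b)^T}^T$ with $w_\alpha(x) \in \R^{\beta(\alpha+1)}$ and $w_\alpha^b \in \R^{2\beta\alpha}$, and the block sizes $\beta(\alpha+1)$ and $2\beta\alpha$ of $B_b(x)$ are arranged to match this split exactly. Carrying out the block multiplication then gives
\[\wb(x)^T B_b(x) \wb(x) = w_\alpha(x)^T B_1(x) F w_\alpha^b + (w_\alpha^b)^T B_2(x) F w_\alpha^b,\]
where the two zero blocks in the first block-column of $B_b(x)$ are precisely what eliminate every term not containing the factor $F w_\alpha^b$.

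The crux is then immediate: both terms on the right carry the common right factor $F w_\alpha^b$, and membership $w \in \mcl{B}$ means exactly that $F w_\alpha^b = 0_{\beta\alpha,1}$. Moreover $w_\alpha^b$ is a constant vector, depending only on the boundary data $w_{\alpha-1}(0)$ and $w_{\alpha-1}(1)$ and not on $x$, so the relation $F w_\alpha^b = 0$ holds uniformly in $x$. Hence the integrand is identically zero on $[0,1]$ for arbitrary polynomial matrices $B_1$ and $B_2$, and the stated integral vanishes.

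I do not expect a real obstacle, since the lemma is a bookkeeping identity deliberately engineered so that the block pattern of $B_b(x)$ forces the constraint factor $F w_\alpha^b$ into the quadratic form. The only points demanding mild care are verifying that the block dimensions of $B_b(x)$ align with the partition of $\wb(x)$ (so that $B_1 F$ and $B_2 F$ are conformable), and noting the $x$-independence of $w_\alpha^b$, which is what upgrades the pointwise vanishing to vanishing of the integral without any further estimates.
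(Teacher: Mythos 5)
Your proposal is correct and follows essentially the same route as the paper: both arguments reduce the quadratic form with $He(B_b)$ to the one with $B_b$ via the scalar-equals-its-transpose observation, and both exploit the factorization of $B_b(x)$ so that every surviving term carries the factor $F w_\alpha^b$, which vanishes on $\mcl{B}$. The only cosmetic difference is direction — the paper builds the quadratic form up from the identity $F w_\alpha^b = 0$, while you expand the quadratic form down to terms containing $F w_\alpha^b$ — and your pointwise block computation is, if anything, slightly more explicit.
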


We now use the results in Lemmas~\ref{lem:FTC}-\ref{lem:boundary} to formulate slack integrals on the set $\mcl{B} \in \mcl{H}^\alpha\left([0,1];\R^\beta \right)$.

Let us define
\begin{align}
\mcz{S}(w)=&\int_0^1 \wb(x)^T He\left(K_b(x)+H_b(x)+B_b(x)  \right)\wb(x)dx \notag \\
&\label{eqn:slack_integral}  + \int_\Delta \w(x)^T \Gamma \left[\bar{H}  \right] \w(y) d\Delta, 
\end{align} where $K_b$ is parameterized by $K_1$ and $K_2$ as in Lemma~\ref{lem:FTC}, $H_b$ and $\bar{H}$ are parameterized by $H_1$ and $H_2$ as in Lemma~\ref{lem:Greens} and $B_b$ is parameterized by $B_1$ and $B_2$, and the matrix $F$ which defines the set $\mcl{B}$ as in Lemma~\ref{lem:boundary}.

We now state the main result of this section.
\begin{theorem}\label{thm:slack_equality}
Given matrix $F$ which defines the set $\mcl{B}$ in~\eqref{eqn:BC}, the following identity holds true
\begin{equation}\label{eqn:slack_equality}
\mcz{S}(w)=0, \quad \forall w \in \mcl{B},
\end{equation} where $\mcz{S}(w)$ is parameterized by any $K_i$, $H_i$ and $B_i$, $i \in \{1,2\}$, as in  in~\eqref{eqn:slack_integral}.
\end{theorem}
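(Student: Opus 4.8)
The plan is to prove~\eqref{eqn:slack_equality} by a direct linear decomposition of $\mcz{S}(w)$ into the three pieces governed by Lemmas~\ref{lem:FTC},~\ref{lem:Greens}, and~\ref{lem:boundary}, and then to discharge each piece by invoking the corresponding lemma. First I would exploit the linearity of the operator $He(\cdot)$ and of the integral to split the integral over $[0,1]$ in~\eqref{eqn:slack_integral} according to $He(K_b+H_b+B_b)=He(K_b)+He(H_b)+He(B_b)$, writing
\begin{align*}
\mcz{S}(w) = & \int_0^1 \wb(x)^T He\left(K_b(x)\right) \wb(x)\,dx \\
& + \int_0^1 \wb(x)^T He\left(H_b(x)\right) \wb(x)\,dx \\
& + \int_\Delta \w(x)^T \Gamma\left[\bar{H}\right] \w(y)\,d\Delta \\
& + \int_0^1 \wb(x)^T He\left(B_b(x)\right) \wb(x)\,dx.
\end{align*}

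Then the argument is to evaluate the terms in three natural groups. The first line vanishes for every $w \in \mcl{H}^\alpha\left([0,1];\R^\beta \right)$ by Lemma~\ref{lem:FTC}, since $K_b$ is precisely the matrix built there from $K_1,K_2$. The second and third lines together form exactly the left-hand side of the identity in Lemma~\ref{lem:Greens} with the same $H_1,H_2$, and hence also vanish for every $w \in \mcl{H}^\alpha\left([0,1];\R^\beta \right)$. The point requiring a little care is that the two-dimensional slack integral over $\Delta$ must be paired with the $H_b$ contribution and not with $K_b$ or $B_b$: it is the pair $(H_b,\bar{H})$ that Green's theorem couples into a single identity, so this grouping is forced. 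Finally, the last line vanishes by Lemma~\ref{lem:boundary}, but this time only on the subset $\mcl{B}$, because that lemma uses the constraint $F w_\alpha^b=0$ that defines $\mcl{B}$ in~\eqref{eqn:BC}.

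To conclude I would restrict to $w \in \mcl{B}$: since $\mcl{B} \subset \mcl{H}^\alpha\left([0,1];\R^\beta \right)$, the first three groups continue to vanish on $\mcl{B}$, while the fourth vanishes there as well, so summing gives $\mcz{S}(w)=0$ for all $w \in \mcl{B}$, which is~\eqref{eqn:slack_equality}. I do not expect a genuine obstacle here, since all of the analytic content---the Fundamental Theorem of Calculus, Green's theorem, and the algebraic encoding of the boundary conditions---has already been discharged in the three preceding lemmas; the theorem is essentially their bookkeeping sum. The only substantive point is to track the domain of validity of each identity, noting that the FTC and Green's identities hold on all of $\mcl{H}^\alpha\left([0,1];\R^\beta \right)$ whereas the boundary identity holds only on $\mcl{B}$, so the final conclusion is correctly stated over $\mcl{B}$ rather than the full space.
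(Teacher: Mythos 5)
Your proof is correct and follows essentially the same route as the paper: the paper likewise splits $\mcz{S}(w)$ into the $K_b$ term, the $(H_b,\bar{H})$ pair, and the $B_b$ term, and discharges them via Lemmas~\ref{lem:FTC},~\ref{lem:Greens} and~\ref{lem:boundary}, with the first two identities holding on all of $\mcl{H}^\alpha\left([0,1];\R^\beta\right)$ and the last only on $\mcl{B}$. Your remark that the $\Delta$ integral must be grouped with $H_b$ is exactly the grouping the paper uses.
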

\begin{proof}
We begin by considering the following decomposition
\begin{align}
\mcz{S}(w)=&\int_0^1 \wb(x)^T He\left(K_b(x)+H_b(x)+B_b(x)  \right)\wb(x)dx \notag \\
&\label{eqn:slack:decomp}  + \int_\Delta \w(x)^T \Gamma \left[\bar{H}  \right] \w(y) d\Delta   = \sum_{i=1}^3 \Theta_i,
\end{align} where
\begin{align*}
\Theta_1 = & \int_0^1 \wb(x)^T He\left( K_b(x) \right)\wb(x)dx, \\
 \Theta_2 = & \int_0^1 \wb(x)^T He \left( H_b(x) \right) \wb(x)dx \\
 &  + \int_\Delta \w(x)^T \Gamma \left[\bar{H}  \right] \w(y) d\Delta, \\
\Theta_3 = & \int_0^1 \wb(x)^T He \left( B_b(x) \right) \wb(x).
\end{align*} From Lemmas~\ref{lem:FTC} and~\ref{lem:Greens} we have that
\begin{equation}\label{eqn:slack:decomp1}
\Theta_1 = 0 \quad \text{and} \quad \Theta_2=0, \quad \forall w \in \mcl{H}^\alpha\left([0,1];\R^\beta \right).
\end{equation} From Lemma~\ref{lem:boundary} we have that
\begin{equation}\label{eqn:slack:decomp2}
\Theta_3 = 0, \quad \forall w \in \mcl{B}.
\end{equation} Therefore, from~\eqref{eqn:slack:decomp}-\eqref{eqn:slack:decomp2} we conclude that the expression in~\eqref{eqn:slack_equality} holds for all $w \in \mcl{B}$.
\end{proof}
 
\section{MAIN RESULT}\label{sec:main}

In this section we use the results formulated in Sections~\ref{sec:non_neg_ineq}-\ref{sec:slack} to construct a method of verifying the stability of PDE~\eqref{eqn:PDE_complete}. 
Let us proceed with the following.
\begin{theorem}\label{thm:main}
Consider the PDE~\eqref{eqn:PDE_complete}. For any given $d \in \N$, positive scalars $\epsilon,\delta$ and polynomial $Y_{q(\alpha,\beta,d)} \in \mcl{R}^{q(\alpha,\beta,d) \times \beta(\alpha+1)}[(x,y)]$ defined in~\eqref{eqn:Y}, suppose there exist: 
\begin{subequations}\label{eqn:vars}
\begin{align}
&\label{eqn:var1}\text{(polynomial) matrices defining $\mcz{R}(w)$ in~\eqref{eqn:Rb_Rbar}},\\
&\label{eqn:var2}\text{(polynomial) matrices defining $\mcz{T}(w)$ in~\eqref{eqn:Tb_Tbar}},\\
&\label{eqn:var3}S_T \in \mcl{S}^{\beta+2q(0,\beta,d)}[x], \quad S_R \in \mcl{S}^{\beta(3\alpha+1)+2q(\alpha,\beta,d)}[x], \\
&\label{eqn:var4}K_1 \in \mcl{R}^{\beta \alpha \times \beta \alpha}[x], \quad K_2 \in \mcl{R}^{\beta \alpha \times 2\beta \alpha}[x],\\ 
&\label{eqn:var5}H_1,H_2 \in \mcl{R}^{\beta \alpha \times \beta \alpha}[(x,y)], \quad S \in \mcl{S}^{\beta(3\alpha+1)}[x],\\
&\label{eqn:var6}B_1 \in \mcl{R}^{\beta(\alpha+1) \times \beta \alpha}[x], \quad  B_2 \in \mcl{R}^{2\beta \alpha \times \beta \alpha}[x],
\end{align}
\end{subequations} such that
\begin{subequations}\label{eqn:constraints}
\begin{align}
& T(x) - \bmat{\epsilon I_\beta & 0_{\beta,2q(0,\beta,d)} \\ 0_{2q(0,\beta,d),\beta} & 0_{2q(0,\beta,d)}} \notag \\
&\label{eqn:constraint2}\qquad \qquad \qquad \qquad-S_T(x) \omega(x) \in \Sigma^{\beta + 2q(0,\beta,d)}[x],\\
&\label{eqn:constraint2a} S_T \in \Sigma^{\beta+2q(0,\beta,d)}[x],\\
&\label{eqn:constraint4} R(x) - S_R(x) \omega(x) \in \Sigma^{\beta(3\alpha+1)+2 q(\alpha,\beta,d)}[x],\\
&\label{eqn:constraint4a} S_R(x)  \in \Sigma^{\beta(3\alpha+1)+2 q(\alpha,\beta,d)}[x],\\
&He \left(U_b(x)+K_b(x)+H_b(x)+B_b(x)  \right) \notag \\
&\label{eqn:constraint5} \qquad \qquad \qquad \qquad  -R_b(x)-S(x) \omega (x) \in \Sigma^{\beta(3\alpha+1)}[x],\\
&\label{eqn:constraint5a} S(x)  \in \Sigma^{\beta(3\alpha+1)}[x],\\
&\label{eqn:constraint6} \bar{U}(x,y)+\bar{H}(x,y)-\bar{R}(x,y) = 0_{\alpha+1}, 
\end{align}
\end{subequations} where $\omega(x)=x(1-x)$, $T(x)$ and $R(x)$ are defined in~\eqref{eqn:T} and~\eqref{eqn:R}, respectively, $K_b$ is defined using $K_i$ as in Lemma~\ref{lem:FTC}, $H_b$ and $\bar{H}$ are defined using $H_i$ as in Lemma~\ref{lem:Greens}, $B_b$ is defined using $B_i$ and $F$ as in Lemma~\ref{lem:boundary}, and polynomials $U_b$ and $\bar{U}$ are defined using $\delta$ in~\eqref{eqn:Vd_int} and with $T_b$ and $\bar{T}$ defined in~\eqref{eqn:Tb_Tbar}.

Then,~\eqref{eqn:PDE_complete} is exponentially stable.
\end{theorem}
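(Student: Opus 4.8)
The plan is to discharge the two hypotheses of Theorem~\ref{thm:analysis}, the positivity bound~\eqref{eqn:stab:V_condition} and the non-negativity condition~\eqref{eqn:stab:Vd_condition}, after which exponential stability is immediate. I would take the LF candidate $\mcz{V}(w)$ in~\eqref{eqn:V_int} with $T_b$ and $\bar{T}$ supplied by the matrices defining $\mcz{T}(w)$ in~\eqref{eqn:Tb_Tbar}, so that $\mcz{V}(w)=\mcz{T}(w)$ identically.

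To establish~\eqref{eqn:stab:V_condition}, I would invoke Corollary~\ref{cor:pos}. Constraints~\eqref{eqn:constraint2} and~\eqref{eqn:constraint2a}, combined with the facts that every SOS matrix is PSD for all $x\in\R$ and that the multiplier $\omega(x)=x(1-x)$ is non-negative on $[0,1]$, force $T(x)-\mathrm{diag}(\epsilon I_\beta,0,0)\succeq S_T(x)\omega(x)\succeq 0$ for every $x\in[0,1]$. This is exactly hypothesis~\eqref{eqn:T_condition}, so~\eqref{eqn:T_conclusion} gives $\mcz{V}(w)=\mcz{T}(w)\geq\epsilon\norm{w}_{\mcl{L}_2}^2$, i.e.~\eqref{eqn:stab:V_condition} with $\mu=\epsilon$.

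The substance of the argument is~\eqref{eqn:stab:Vd_condition}, which I would reach through the slack-integral construction. Building $K_b,H_b,\bar{H},B_b$ from the variables~\eqref{eqn:var4}--\eqref{eqn:var6} as in Lemmas~\ref{lem:FTC}--\ref{lem:boundary} and assembling $\mcz{S}(w)$ as in~\eqref{eqn:slack_integral}, Theorem~\ref{thm:slack_equality} guarantees $\mcz{S}(w)=0$ for all $w\in\mcl{B}$. Since $U_b$ is already symmetric, the sum $\mcz{V}_d(w)+\mcz{S}(w)$ has $1$-D kernel $He(U_b+K_b+H_b+B_b)$ and $2$-D kernel $\Gamma[\bar{U}+\bar{H}]$. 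The equality constraint~\eqref{eqn:constraint6}, which reads $\bar{R}=\bar{U}+\bar{H}$, makes the $2$-D kernel coincide with that of $\mcz{R}(w)$ in~\eqref{eqn:Rb_Rbar}, leaving only the $1$-D terms:
\[
\mcz{V}_d(w)+\mcz{S}(w)-\mcz{R}(w)=\int_0^1\wb(x)^T\big(He(U_b+K_b+H_b+B_b)-R_b\big)\wb(x)\,dx.
\]
Constraints~\eqref{eqn:constraint5}--\eqref{eqn:constraint5a} render this integrand pointwise PSD on $[0,1]$, again via the $\omega$ multiplier, so the right-hand side is non-negative for every $w\in\mcl{H}^\alpha\left([0,1];\R^\beta\right)$. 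In parallel, constraints~\eqref{eqn:constraint4}--\eqref{eqn:constraint4a} yield $R(x)\succeq 0$ on $[0,1]$, so Theorem~\ref{thm:non_neg} gives $\mcz{R}(w)\geq 0$ on $\mcl{H}^\alpha\left([0,1];\R^\beta\right)$. Chaining the two inequalities produces $\mcz{V}_d(w)+\mcz{S}(w)\geq 0$ on the whole space, and restricting to $\mcl{B}$, where $\mcz{S}(w)=0$, yields $\mcz{V}_d(w)\geq 0$, which is~\eqref{eqn:stab:Vd_condition}. Theorem~\ref{thm:analysis} then delivers~\eqref{eqn:stab:conclusion}.

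I expect the main obstacle to be bookkeeping rather than analysis: one must check that the block-structured kernels $U_b,K_b,H_b,B_b,R_b$ (all of size $\beta(3\alpha+1)$) and the bivariate kernels $\bar{U},\bar{H},\bar{R}$ align so that $\mcz{V}_d+\mcz{S}$ and $\mcz{R}$ differ only in their $1$-D parts, and that the $He(\cdot)$ symmetrizations are consistent with $U_b$ being symmetric. The one genuinely substantive point is that the SOS-plus-$\omega$ certificates certify positivity on $[0,1]$ and not merely on $\R$; the multiplier $\omega(x)=x(1-x)$ is what localizes each certificate to the spatial interval, which is essential since Corollary~\ref{cor:pos}, Theorem~\ref{thm:non_neg}, and the displayed integrand are required to be non-negative only over $[0,1]$.
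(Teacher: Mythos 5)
Your proposal is correct and follows essentially the same route as the paper: Corollary~\ref{cor:pos} via the SOS-plus-$\omega(x)$ multiplier for~\eqref{eqn:stab:V_condition}, and the decomposition $\mcz{V}_d+\mcz{S}-\mcz{R}$ (which the paper names $\mcz{P}$) made pointwise non-negative by~\eqref{eqn:constraint5}--\eqref{eqn:constraint6}, combined with $\mcz{R}\geq 0$ and Theorem~\ref{thm:slack_equality}, for~\eqref{eqn:stab:Vd_condition}. The only slip is the claim $\mcz{V}(w)=\mcz{T}(w)$: by~\eqref{eqn:V_int} one has $\mcz{V}(w)=\hlf\mcz{T}(w)$, so the bound holds with $\mu=\hlf\epsilon$ rather than $\mu=\epsilon$, which is immaterial to the conclusion.
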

\begin{proof}
Since the polynomial matrix in~\eqref{eqn:constraint2} is Sum-of-Squares (SOS), $S_T$ is SOS in~\eqref{eqn:constraint2a} and $\omega(x) \geq 0$, for all $x \in [0,1]$, using the property that a SOS polynomial is non-negative for all $x \in \R$, we conclude that $T(x)$ satisfies~\eqref{eqn:T_condition}. Therefore,
$\mcz{T}(w) \geq \epsilon \norm{w}_{\mcl{L}_2}$, for all $w \in \mcl{L}_2\left([0,1];\R^\beta \right)$. Moreover, from~\eqref{eqn:V_int} we have that $\mcz{V}(w) = \hlf \mcz{T}(w)$ and thus
\begin{equation}\label{eqn:main:conclusion1}
\mcz{V}(w) \geq \mu \norm{w}_{\mcl{L}_2}, \quad \forall w \in \mcl{L}_2\left([0,1];\R^\beta \right),
\end{equation} with $\mu = \hlf \epsilon$. Similarly, since~\eqref{eqn:constraint4}-\eqref{eqn:constraint4a} hold, we conclude from Theorem~\ref{thm:non_neg} that
\begin{equation}\label{eqn:main:R}
\mcz{R}(w) \geq 0, \quad \forall w \in \mcl{H}^\alpha\left([0,1];\R^\beta \right).
\end{equation} 

Now, let us define
\begin{align}
\mcz{P}(w) \hspace{-1mm}= \hspace{-1mm} &\label{eqn:main:P_int} \int_0^1 \wb(x)^T P_b(x) \wb(x)dx  + \int_\Delta \w(x)^T \Gamma \left[\bar{P} \right] \w(y)d\Delta,
\end{align} where
\begin{align*}
P_b(x)=&He \left(U_b(x)+K_b(x)+H_b(x)+B_b(x) \right)   - R_b(x) ,\\
\bar{P}(x,y)=& \bar{U}(x,y)+\bar{H}(x,y)-\bar{R}(x,y). 
\end{align*}
Since~\eqref{eqn:constraint5}-\eqref{eqn:constraint6} hold, we deduce that
\begin{align*}
P_b(x) &\succeq 0, \quad \forall x \in [0,1],\quad \bar{P}(x,y)&=0, \quad \forall (x,y) \in \Delta.
\end{align*}  Therefore, we get
\begin{equation}\label{eqn:main:P_condition}
\mcz{P}(w) \geq 0, \quad \forall w \in \mcl{H}^\alpha\left([0,1];\R^\beta \right).
\end{equation} From the definition of $\mcz{P}(w)$ in~\eqref{eqn:main:P_int} it is clear that
$\mcz{P}(w)=\mcz{V}_d(w) + \mcz{S}(w) - \mcz{R}(w)$, where $\mcz{V}_d(w)$ is defined in~\eqref{eqn:Vd_int} and $\mcz{S}(w)$ is defined in~\eqref{eqn:slack_integral}. Thus, using~\eqref{eqn:main:P_condition} we obtain
\[\mcz{V}_d(w) + \mcz{S}(w) - \mcz{R}(w) \geq 0, \quad \forall w \in \mcl{H}^\alpha\left([0,1];\R^\beta \right).\]

Using~\eqref{eqn:main:R} the previous expression may be reduced to
\begin{equation}\label{eqn:main:cond1}
\mcz{V}_d(w) + \mcz{S}(w)  \geq 0, \quad \forall w \in \mcl{H}^\alpha\left([0,1];\R^\beta \right).
\end{equation} Now, from Theorem~\ref{thm:slack_equality} we have that $\mcz{S}(w)=0$, for all $w \in \mcl{B}$, thus, from~\eqref{eqn:main:cond1} we deduce that
\begin{equation}\label{eqn:main:conclusion2}
\mcz{V}_d(w) \geq 0, \quad \forall w \in \mcl{B} \subset \mcl{H}^\alpha\left([0,1];\R^\beta \right).
\end{equation} Since~\eqref{eqn:main:conclusion1} and~\eqref{eqn:main:conclusion2} hold, we apply Theorem~\ref{thm:analysis} to complete the proof. 
\end{proof}

Since polynomials are closed under differentiation and integration, we have that $K_b$ and $\bar{H}$ containing $\pfx K_1(x)$, $\pfx K_2(x)$, $\pfy H_1(x,y)$ and $\pfx H_2(x,y)$ in Lemmas~\ref{lem:FTC}-\ref{lem:Greens} are polynomials. Moreover, since the polynomial $Y_{q(\alpha,\beta,d)}(x,y)$ is fixed, the terms $R_b$, $\bar{R}$, $T_b$ and $\bar{T}$ in Theorem~\ref{thm:non_neg} and Corollary~\ref{cor:pos} are polynomials affine in their respective (polynomial) matrices. Therefore, the conditions in~\eqref{eqn:constraints} are simply either, $1$) a verification of the membership of polynomial matrices in the set of SOS polynomials as in~\eqref{eqn:constraint2}-\eqref{eqn:constraint5a}, or, $2$) enforcement of affine constraints on the polynomial variables as in~\eqref{eqn:constraint6}. Indeed, the problem of searching for SOS polynomials subject to affine constraints is a Semi-Definite Program (SDP),~\cite[Chapter~3]{blekherman2012semidefinite},~\cite{vandenberghe1996semidefinite}. We are then interested in solving
\begin{equation}\label{eqn:poly:convex_opt}
\textbf{SDP Problem:} \quad \text{Find}~\eqref{eqn:vars}~\text{subject to}~\eqref{eqn:constraints}.
\end{equation}
The numerical implementation is performed by constructing the underlying SDP for~\eqref{eqn:poly:convex_opt} by using the freely available packages SOSTOOLS~\cite{sostools} or YALMIP~\cite{lofberg2005yalmip}. Then, the associated SDP is solved, for example, using SeDuMi~\cite{sturm1999using} or SDPA~\cite{yamashita2003} solvers.

\subsection{Numerical Examples}\label{sec:num}

We now determine the stability of PDEs~\eqref{eqn:exmp:parabolic},~\eqref{eqn:exmp:hyperbolic} and~\eqref{eqn:exmp:beam} presented in Section~\ref{sec:pro_statement} by solving~\eqref{eqn:poly:convex_opt} for each of these systems. Recall that each of these systems are parameterized by a positive scalar $\lambda \in \R$. The parameter $\lambda$ can alter the stability properties of these systems, and thus, allows us to measure the effectiveness of the proposed methodology.

We run the numerical codes for the example PDEs with $\epsilon = 10^{-3}$, $\delta= 10^{-4}$ and maximum polynomial degree allowed by the memory resources of $8$ giga bytes of RAM. In the following examples, we search for the maximum $\lambda \in \R$ for which the conditions of problem~\eqref{eqn:poly:convex_opt} are feasible using a bisection search with a resolution of $10^{-3}$. These results are provided in Table~\ref{tab:1}. The discussion of the results is provided below.

\begin{table}
\centering
\begin{tabular}{ |c|c|c|c|c|} 
 \hline
 & $d=2$ & $4$ & $6$ & $8$ \\ \hline 
 Example~$1$ Eqn.~\eqref{eqn:exmp:parabolic} & $\lambda=3.263 $ & $3.263$ & $3.409$ & $3.409$ \\ 
 Example~$2$ Eqn.~\eqref{eqn:exmp:hyperbolic} & Inf. & Inf. & $\lambda = 0.999$ & $0.999$ \\
 Example~$3$ Eqn.~\eqref{eqn:exmp:beam} & $\lambda=0.999$ & $0.999$ & $0.999$ & - \\
 \hline
\end{tabular}
\caption{Maximum $\lambda \in \R$ for which problem~\eqref{eqn:poly:convex_opt} is feasible for Examples~$1$-$3$ in Equations~\eqref{eqn:exmp:parabolic}-\eqref{eqn:exmp:beam}, respectively, as a function of polynomial degree $d$. Here, Inf. denotes infeasibility.}
\label{tab:1}
\end{table}

\subsubsection*{Example~$1$} Using finite-differences with $1500$ uniformly distributed spatial points we approximate that the parabolic PDE in~\eqref{eqn:exmp:parabolic} is stable for $\lambda <  3.412$.  As illustrated in Table~\ref{tab:1}, the maximum $\lambda$ for which the problem~\eqref{eqn:poly:convex_opt} is feasible is $\lambda=3.409$ which is $99.91 \%$ of the value of $3.412$ obtained via the finite-difference approximation.

\subsubsection*{Example~$2$} Using~\cite[Lemma~3.1]{di2013stabilization} it can be established that the coupled first order hyperbolic PDE in~\eqref{eqn:exmp:hyperbolic} is exponentially stable for $\lambda < 1$.  As illustrated in Table~\ref{tab:1} the maximum $\lambda$ for which problem~\eqref{eqn:poly:convex_opt} is feasible is $\lambda=0.999$ which is $99.9 \%$ of the stability value of $1$.

\subsubsection*{Example~$3$} Finally, the Euler-Bernoulli beam model in~\eqref{eqn:exmp:beam} is stable for $\lambda<1$ (see~\cite[Exercise~8.3]{krstic2008boundary}). From Table~\ref{tab:1} we observe that the maximum $\lambda$ for which problem~\eqref{eqn:poly:convex_opt} is feasible is $\lambda=0.999$ which is $99.9 \%$ of the stability value of $1$. Note that  owing to memory constraints, we could only solve the problem ~\eqref{eqn:poly:convex_opt} for a maximum degree of $d=6$.  

Note that our method successfully establishes the stability of the example PDEs within $99.9 \%$ of the approximated/analytic stability margin for $\lambda$. Finally, the time taken to search for variables which solve the problem~\eqref{eqn:poly:convex_opt} is provided in Table~\ref{tab:2}. 

\begin{table}
\centering
\begin{tabular}{ |c|c|c|c|c|} 
 \hline
   &   $d=2$ & $d=4$ & $d=6$ & $d=8$ \\ \hline 
 Example~$1$ Eqn.~\eqref{eqn:exmp:parabolic} & $4.415$ & $8.111$ & $20.273$ & $48.080$  \\ 
 Example~$2$ Eqn.~\eqref{eqn:exmp:hyperbolic} & $7.085$ & $12.796$ & $49.751$ & $141.297$  \\
 Example~$3$ Eqn.~\eqref{eqn:exmp:beam} & $25.663$ & $113.708$ & $360.876$ & -  \\
 \hline
\end{tabular}
\caption{Computer run time (in seconds) for performing the search for variables which solve problem~\eqref{eqn:poly:convex_opt} for Examples~$1$-$3$ in Equations~\eqref{eqn:exmp:parabolic}-\eqref{eqn:exmp:beam}, respectively.}
\label{tab:2}
\end{table}
\section{CONCLUSION}

We presented a method to verify stability for a large class of 1-D PDEs with polynomial data. The presented methodology relies on using Lyapunov's method to reduce the problem of stability to the verification of integral inequalities. An application of the fundamental theorem of calculus and Green's theorem allows us to formulate a polynomial problem for verifying the integral inequalities. Using the properties of SOS polynomials allowed us to solve the polynomial problem in a computationally efficient manner by casting the problem as an SDP. The numerical experiments indicate that the method can predict the stability of the systems considered up to a high degree of accuracy. We would like to extend this method to consider an even larger class of PDEs, for example, by including Partial (Integro)-Differential Equations (P(I)DEs) and boundary feedback. Furthermore, we would like to formulate this theory for general PDEs, i.e., PDEs not constrained to have polynomial data. Eventually, we would like to extend this framework to in-domain/boundary controller synthesis for PDEs.



\appendix
\setcounter{lemma}{0}
\renewcommand{\thelemma}{\Alph{section}.\arabic{lemma}}
In the Appendix we show how $\mcz{V}_d$ is cast as~\eqref{eqn:Vd_int} and we provide the proofs of the results used in this paper.

The following Lemma is used throughout the exposition.
\begin{lemma}\label{lem:switch}
For any bivariate matrices $K_1(x,y),K_2(x,y) \in \R^{\beta(\alpha+1) \times \beta(\alpha+1)}$, the following identity holds
\begin{align*}
&\int_0^1 \int_0^x \w(x)^T K_1(x,y) \w(y)dy dx \\
&  + \int_0^1 \int_x^1 \w(x)^T K_2(x,y) \w(y)dy dx \\
& = \hlf \int_\Delta \w(x)^T  \Gamma \left[ K_1(x,y) +  K_2(y,x)^T  \right] \w(y) d\Delta.  
\end{align*}
\end{lemma}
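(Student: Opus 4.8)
The plan is to obtain the identity directly from the defining integral property of the operator $\Gamma$ recorded in the Notation section. Writing $K(x,y) = K_1(x,y) + K_2(y,x)^T$ and applying that property with $v = \w$ gives
\[
\int_\Delta \w(x)^T \Gamma\left[K\right]\w(y)\,d\Delta = \igzo\igzx \w(x)^T K(x,y)\w(y)\,dy\,dx + \igzo\igxo \w(x)^T K(y,x)^T \w(y)\,dy\,dx.
\]
Since $K(y,x)^T = K_1(y,x)^T + K_2(x,y)$, expanding both terms splits the right-hand side into four iterated integrals, two over the triangle $\om$ (where $x \geq y$) and two over its reflection $\underline{\Omega}$ (where $y \geq x$).

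The heart of the argument is to show that two of these four integrals coincide with the other two, so that the factor $\hlf$ in the claimed identity exactly cancels the doubling. First I would take $\igzo\igzx \w(x)^T K_2(y,x)^T \w(y)\,dy\,dx$, which lives on $\om$. Relabelling the integration variables $x \leftrightarrow y$ carries it to the integral $\igzo\igxo \w(y)^T K_2(x,y)^T \w(x)\,dy\,dx$ over $\underline{\Omega}$; because the integrand is a scalar it equals its own transpose, and this rewrites as $\igzo\igxo \w(x)^T K_2(x,y)\w(y)\,dy\,dx$. The same manipulation applied to $\igzo\igxo \w(x)^T K_1(y,x)^T \w(y)\,dy\,dx$ (now starting on $\underline{\Omega}$ and returning to $\om$) reproduces $\igzo\igzx \w(x)^T K_1(x,y)\w(y)\,dy\,dx$.

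After these two substitutions the four integrals become two copies of $\igzo\igzx \w(x)^T K_1(x,y)\w(y)\,dy\,dx$ and two copies of $\igzo\igxo \w(x)^T K_2(x,y)\w(y)\,dy\,dx$; multiplying by $\hlf$ then produces exactly the left-hand side of the lemma. I expect the only delicate point to be the bookkeeping in the variable swap: one must track which triangle each term occupies before and after relabelling, and use that transposing the scalar integrand is precisely what turns $K_i(y,x)^T$ into $K_i(x,y)$. No analytic obstacle appears, the sole ingredient being the change of variables exchanging $x$ and $y$, which is valid since the integrands are integrable over the bounded triangular domains.
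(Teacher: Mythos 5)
Your proof is correct and rests on exactly the same mechanism as the paper's --- transposing the scalar integrand and relabelling $x \leftrightarrow y$ (equivalently, changing the order of integration) to pass between the triangle $\{y\le x\}$ and its reflection --- only run in the opposite direction: you expand the right-hand side via the defining property of $\Gamma$ and collapse the four resulting integrals into two doubled copies, whereas the paper symmetrizes each term of the left-hand side as half itself plus half its swapped form and then invokes the definition of $\Gamma$. The content is identical, so no further comparison is needed.
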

\begin{proof}
We have
\begin{align*}
&\int_0^1 \int_0^x \w(x)^T K_1(x,y) \w(y)dy dx \\
&=  \int_0^1 \int_0^x \w(y)^T K_1(x,y)^T \w(x)dy dx \\
&= \int_0^1 \int_y^1 \w(y)^T K_1(x,y)^T \w(x)dx dy \\
& = \int_0^1 \int_x^1 \w(x)^T K_1(y,x)^T \w(y)dy dx, 
\end{align*} where we first transposed the integrand, followed by a change of order of integration and finally switched between the variables $x$ and $y$. Thus
\begin{align*}
&\int_0^1 \int_0^x \w(x)^T K_1(x,y) \w(y)dy dx \\
& = \hlf  \int_0^1 \int_0^x \w(x)^T K_1(x,y) \w(y)dy dx \\
& \quad + \hlf \int_0^1 \int_x^1 \w(x)^T K_1(y,x)^T \w(y)dy dx \\
& = \hlf \int_\Delta \w(x)^T \Gamma \left[K_1(x,y)\right] \w(y)d\Delta.
\end{align*} Following the same steps for
\[\int_0^1 \int_x^1 \w(x)^T K_2(x,y) \w(y)dy dx,\] then completes the proof.
\end{proof}



We now show how~\eqref{eqn:Vd_int} is formulated. We begin by writing the integral expression in~\eqref{eqn:V_int} as\footnote{For brevity we have dropped the temporal dependency of $w$.}
\begin{equation}\label{eqn:LF:pre_decomp1}
\mcz{V}(w)=\hlf \ip{\Xi w}{ w}_{\mcl{L}_2},
\end{equation} where the self-adjoint operator $\Xi$ on $\mcl{L}_2\left([0,1];\R^\beta\right)$ is defined as
\begin{align*}
\left(\Xi w \right)(x)=& T_b(x) w(x) + \int_0^x \bar{T}(x,y)w(y)dy \\
& + \int_x^1 \bar{T}(y,x)^T w(y)dy. 
\end{align*} Since the operator $\Xi$ is self-adjoint, we may use~\eqref{eqn:PDE_complete} to obtain
\begin{align*}
\pft \mcz{V}(w)
=& \hlf \ip{\Xi \pft w}{ w}_{\mcl{L}_2} + \hlf \ip{\Xi  w}{ \pft w}_{\mcl{L}_2} \\
=&\hlf \ip{\Xi  w}{ \pft w}_{\mcl{L}_2} + \hlf \ip{\Xi  w}{ \pft w}_{\mcl{L}_2} \\
=&\ip{\Xi  w}{ \pft w}_{\mcl{L}_2} \\
=&\int_0^1 \left(  T_b(x) w(x)dx + \int_0^x \bar{T}(x,y)w(y)dy \right. \notag \\
& \left.  \quad  + \int_x^1 \bar{T}(y,x)^T w(y) dy \right)^T   A(x)\w(x)dx .
\end{align*}

Therefore,
\begin{align}
\mcz{V}_d(w)=&-\pft \mcz{V}(w)-2\delta \mcz{V}(w) \notag \\
=&-\int_0^1 \left(  T_b(x) w(x)dx + \int_0^x \bar{T}(x,y)w(y)dy \right. \notag \\
& \left.  \quad  + \int_x^1 \bar{T}(y,x)^T w(y) dy \right)^T   A(x)\w(x)dx \notag \\
& \quad-\delta \int_0^1  w(x)^T T_b(x) w(x)dx \notag \\
&   \quad - \delta \int_\Delta w(x)^T \Gamma \left[\bar{T}(x,y) \right] w(y) d\Delta \notag \\
&\label{eqn:LF:decomp}=- \Phi_1 - \Phi_2,
\end{align} where,
\begin{align*}
\Phi_1=& \int_0^1 \left(  T_b(x) w(x) + \int_0^x \bar{T}(x,y)w(y)dy \right. \notag \\
& \left.    + \int_x^1 \bar{T}(y,x)^T w(y) dy \right)^T   A(x)\w(x)dx,\\
\Phi_2=& \delta \int_0^1  w(x)^T T_b(x) w(x)dx  \\
&   + \delta \int_\Delta w(x)^T \Gamma \left[\bar{T}(x,y) \right] w(y) d\Delta.
\end{align*} The term $\Phi_1$ may be written as
\begin{align*}
\Phi_1=& \int_0^1 \wb(x)^T \bmat{T_b(x)A(x) & 0_{\beta,2\beta \alpha} \\ 0_{3\beta\alpha,\beta(\alpha+1)} & 0_{3\beta\alpha,2\beta\alpha}}\wb(x)dx \\
&+ \int_0^1 \int_0^x \w(x)^T \bmat{\bar{T}(x,y)^T A(x) \\ 0_{\beta\alpha,\beta(\alpha+1)}}^T \w(y)dy dx \\
&+ \int_0^1 \int_x^1 \w(x)^T \bmat{\bar{T}(y,x)^T A(x) \\ 0_{\beta\alpha,\beta(\alpha+1)}}^T \w(y)dy dx,
\end{align*} where we have transposed the two double integrals. Then, applying Lemma~\ref{lem:switch} to the double integrals and writing the single integral kernel in a symmetric form produces
\begin{align}
&\Phi_1 = \notag \\
& \int_0^1 \wb(x)^T He \left( \bmat{T_b(x)A(x) & 0_{\beta,2\beta \alpha} \\ 0_{3\beta\alpha,\beta(\alpha+1)} & 0_{3\beta\alpha,2\beta\alpha}} \right) \wb(x)dx \notag \\
&+ \int_\Delta  \w(x)^T \Gamma\left[ \hlf  \bmat{\bar{T}(x,y)^T A(x) \\ 0_{\beta\alpha,\beta(\alpha+1)}}^T \right.  \notag \\
&\label{eqn:Phi1} \qquad \qquad \qquad \quad  \left. + \hlf \bmat{\bar{T}(x,y) A(y) \\ 0_{\beta\alpha,\beta(\alpha+1)}} \vphantom{\bmat{\bar{T}(x,y) A(y) \\ 0_{\beta\alpha,\beta(\alpha+1)}}^T} \right] \w(y) d\Delta .
\end{align}

The term $\Phi_2$ may be written as
\begin{align}
\Phi_2= & \int_0^1 \wb(x)^T He \left(\bmat{\delta T_b(x) & 0_{\beta,3\beta\alpha} \\ 0_{3\beta\alpha,\beta} & 0_{3\beta\alpha}}    \right) \wb(x)dx \notag \\
&\label{eqn:Phi2} +\hlf \int_\Delta \w(x)^T \Gamma \left[\bmat{2\delta\bar{T}(x,y) & 0_{\beta,\beta\alpha} \\ 0_{\beta\alpha,\beta} & 0_{\beta\alpha}} \right] \w(y)d\Delta.
\end{align} Substituting~\eqref{eqn:Phi1} and~\eqref{eqn:Phi2} into~\eqref{eqn:LF:decomp} gives~\eqref{eqn:Vd_int}.


We now provide the proofs of Lemmas~\ref{lem:Greens} and~\ref{lem:boundary}.

\begin{proof}[Proof of Lemma~\ref{lem:Greens}]
Consider the vector field
\[
\bmat{\phi_1(x,y) \\ \phi_2(x,y)} = \bmat{ w_{\alpha-1}(x)^T H_1(x,y) w_{\alpha-1}(y) \\ w_{\alpha-1}(x)^T H_2(x,y) w_{\alpha-1}(y)},
\]
 Then, by Green's theorem
 \begin{align*}
 &\oint_{\partial \overline{\Omega}} \left(\phi_1(x,y)dx + \phi_2(x,y)dy  \right) \\
 &  + \int_{\overline{\Omega}} \left(\pfy \phi_1(x,y) - \pfx \phi_2(x,y) \right)d \overline{\Omega} = 0,
 \end{align*} where $\partial\overline{\Omega}$ denotes the boundary of the domain $\overline{\Omega}$. Therefore, we obtain
\begin{align}
 & \int_0^1 \left[\phi_1(x,0)+\phi_2(1,x)-\phi_1(x,x)-\phi_2(x,x)  \right] dx \notag \\
 &\label{eqn:calc:initial}   +
\int_0^1 \int_0^x \left[ \pfy \phi_1-\pfx \phi_2  \right]dy dx=0.
\end{align}

Using the definitions of the projection matrices described in Section~\ref{sec:notation}, we obtain 
\begin{align}
&\int_0^1 \left[\phi_1(x,0)+\phi_2(1,x)-\phi_1(x,x)-\phi_2(x,x)  \right] dx \notag \\
&  =\int_0^1
\wb(x)^T
H_b(x) \wb(x)dx \notag \\
&\label{eqn:calc:boundary}  =\int_0^1
\wb(x)^T
He \left(H_b(x) \right) \wb(x)dx.
\end{align} Similarly, we also obtain the following identity
\begin{align*}
&\int_0^1 \int_0^x \left[ \pfy \phi_1(x,y)-\pfx \phi_2(x,y)  \right]dy dx \\
&  = \int_0^1 \int_0^x \w(x)^T 2 \bar{H}(x,y) \w(y)dy dx.
\end{align*} Then, applying Lemma~\ref{lem:switch} with $K_1 = 2\bar{H}$ and $K_2 = 0$ produces
\begin{align}
&\int_0^1 \int_0^x \w(x)^T 2 \bar{H}(x,y) \w(y)dy dx \notag \\
&\label{eqn:calc:indomain_1} \qquad = \int_\Delta  \w(x)^T \Gamma \left[ \bar{H}(x,y) \right] \w(y)dy dx.
\end{align} Substituting~\eqref{eqn:calc:boundary} and~\eqref{eqn:calc:indomain_1} into~\eqref{eqn:calc:initial} completes the proof.
\end{proof}

\begin{proof}[Proof of Lemma~\ref{lem:boundary}]
Since for all $w \in \mcl{B}$,
\[ F w_\alpha^b=0_{\beta\alpha,1},\] we get for all $w \in \mcl{B}$ 
\begin{align*}
0=&\int_0^1 \wb(x)^T \bmat{B_1(x) \\ B_2(x)}dx \cdot  Fw_\alpha^b \\
=& \int_0^1 \wb(x)^T \bmat{B_1(x) \\ B_2(x)}  Fw_\alpha^b dx \\
=&  \int_0^1 \wb(x)^T \bmat{B_1(x) \\ B_2(x)}  \bmat{0_{\beta\alpha,\beta(\alpha+1)} & F}\wb(x) dx \\
=&   \int_0^1 \wb(x)^T \bmat{B_1(x) \\ B_2(x)} \bmat{0_{\beta(\alpha+1)} & B_1(x)F \\ 0_{2\beta\alpha,\beta(\alpha+1)} & B_2(x)F} \wb(x)dx \\
=&   \int_0^1 \wb(x)^T B_b(x) \wb(x)dx.
\end{align*}   Then, writing the kernel in symmetric form completes the proof.
\end{proof}


Finally, we provide the following result which we will use in the proof of Theorem~\ref{thm:non_neg} and Corollary~\ref{cor:pos}.

\begin{lemma}\label{lem:order}
For any $v \in \mcl{L}_2\left([0,1];\R^\beta \right)$ and polynomial matrices $F_1,F_2,G_1,G_2 \in  \mcl{R}^{\beta \times \beta}[x,y]$, the following identity holds
\begin{align}
&\int_0^1 \left( \int_0^x F_1(x,y)v(y)dy + \int_x^1 F_2(x,y)v(y)dy  \right)^T \notag \\
& \times \left( \int_0^x G_1(x,y)v(y)dy + \int_x^1 G_2(x,y)v(y)dy  \right)dx \notag \\
&\label{eqn:order} = \hlf \int_\Delta v(x)^T \Gamma \left[K \right] v(y) d\Delta,
\end{align} where
\begin{align*}
&K(x,y)\\
=& \int_0^y \left(F_2(z,x)^T G_2(z,y) + G_2(z,x)^T F_2(z,y)   \right)dz \\
& + \int_y^x \left(F_2(z,x)^T G_1(z,y) + G_2(z,x)^T F_1(z,y)   \right)dz \\
& + \int_x^1 \left(F_1(z,x)^T G_1(z,y) + G_1(z,x)^T F_1(z,y)  \right)dz.
\end{align*} The same result holds for any $v \in \mcl{H}^\alpha\left([0,1];\R^\beta\right)$, $F_1,F_2,G_1,G_2 \in \mcl{R}^{\beta(\alpha+1) \times \beta(\alpha+1)}[x,y]$ and $v(x)$ replaced by $v_\alpha(x)$ in~\eqref{eqn:order}.
\end{lemma}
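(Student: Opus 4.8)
The plan is to expand the product on the left-hand side of~\eqref{eqn:order}, apply Fubini's theorem to move the outer integration variable to the innermost position, and then repackage the result into the $\Gamma[\cdot]$ form by invoking Lemma~\ref{lem:switch}. First I would give the two inner integrals distinct dummy variables, say $s$ and $t$, and call the outer variable $x$; expanding the product then yields a sum of four triple integrals whose integrands are $v(s)^T F_i(x,s)^T G_j(x,t)\, v(t)$. The four terms are supported on the regions $\{s\le x,\,t\le x\}$, $\{s\le x,\,t\ge x\}$, $\{s\ge x,\,t\le x\}$ and $\{s\ge x,\,t\ge x\}$, carrying the products $F_1^TG_1$, $F_1^TG_2$, $F_2^TG_1$ and $F_2^TG_2$, respectively.

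Next I would exchange the order of integration in each term so that the integral over $x$---which I relabel $z$---is performed innermost, collapsing the expression into a single double integral $\int_0^1\int_0^1 v(s)^T \mathcal{K}(s,t)\, v(t)\, ds\, dt$ with a piecewise kernel $\mathcal{K}$. The essential bookkeeping is the $z$-range for fixed $(s,t)$: the region $\{s\le x,\,t\le x\}$ gives $z\in[\max(s,t),1]$, the region $\{s\ge x,\,t\ge x\}$ gives $z\in[0,\min(s,t)]$, while the two mixed terms survive only for a single ordering, producing $z\in[s,t]$ for $F_1^TG_2$ when $s\le t$ and $z\in[t,s]$ for $F_2^TG_1$ when $t\le s$. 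Reading these limits off in the case $s\ge t$ and setting $x=s,\,y=t$ reproduces exactly the ``un-transposed'' half of the stated kernel, namely $\int_x^1 F_1(z,x)^TG_1(z,y)\,dz + \int_y^x F_2(z,x)^TG_1(z,y)\,dz + \int_0^y F_2(z,x)^TG_2(z,y)\,dz$.

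Finally, I would split the $(s,t)$-integral over the regions $\{t\le s\}$ and $\{t\ge s\}$ and apply Lemma~\ref{lem:switch} with $K_1=K_2=\mathcal{K}$, obtaining $\hlf\int_\Delta v(x)^T\Gamma\bigl[\mathcal{K}(x,y)+\mathcal{K}(y,x)^T\bigr]v(y)\,d\Delta$. It then remains to verify the algebraic identity $\mathcal{K}(x,y)+\mathcal{K}(y,x)^T = K(x,y)$ for $x\ge y$: the transposed copy $\mathcal{K}(y,x)^T$ contributes precisely the second summand $G_j^TF_i$ in each of the three $z$-integrals defining $K$, so the symmetrization exactly matches the kernel in the statement. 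The extension to $v\in\mcl{H}^\alpha\bigl([0,1];\R^\beta\bigr)$ with $v_\alpha$ in place of $v$ is immediate, since the argument uses only square-integrability of the vector function and conformability of the matrix dimensions, never the specific block sizes.

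I expect the change-of-order step to be the main obstacle. The two symmetric terms are routine, but the mixed terms are each nonzero for only one ordering of $s$ and $t$, and it is precisely this asymmetry that generates the middle integral $\int_y^x(\cdots)\,dz$ of $K$; correctly determining its endpoints and which product ($F_2^TG_1$ versus $F_1^TG_2$) attaches to which ordering of the final variables is where the bookkeeping is delicate.
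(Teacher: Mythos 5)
Your proof is correct and takes essentially the same route as the paper's: both expand the left-hand side into four terms supported on the four orderings of the inner variables relative to $x$, change the order of integration so that $x$ (relabelled $z$) becomes the innermost variable, and then symmetrize with Lemma~\ref{lem:switch}; your piecewise kernel $\mathcal{K}$ restricted to $y\leq x$ and $y\geq x$ coincides with the paper's $K_1$ and $K_2$. The only cosmetic difference is that the paper organizes the same computation as finding the kernel of the operator composition $\mcl{F}^\star\mcl{G}$ rather than expanding the quadratic form directly.
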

\begin{proof}
We begin by observing that the left hand side of~\eqref{eqn:order} may be written as
\begin{equation}
\ip{\mcl{F} v}{\mcl{G} v}_{\mcl{L}_2}=\ip{ v}{\mcl{F}^\star \mcl{G} v}_{\mcl{L}_2},
\end{equation}
where the linear bounded operators on $\mcl{L}_2\left([0,1];\R^\beta\right)$ are defined as
\begin{align*}
\left(\mcl{F}  v  \right)(x)=& \int_0^x F_1(x,y)v(y)dy + \int_x^1 F_2(x,y)v(y)dy,\\
\left(\mcl{G}  v  \right)(x)=& \int_0^x G_1(x,y)v(y)dy + \int_x^1 G_2(x,y)v(y)dy,
\end{align*} and where the adjoint operator is given by
\[\left(\mcl{F}^\star  v  \right)(x)= \int_0^x F_2(y,x)^T v(y)dy + \int_x^1 F_1(y,x)^T v(y)dy.\]
Now, using the fact that
\[\left(\mcl{G}  v  \right)(y)= \int_0^y G_1(y,z)v(z)dz + \int_x^1 G_2(y,z)v(z)dz,\] we get
\begin{align}
\left( \mcl{F}^\star \mcl{G} v \right)(x)=& \int_0^x F_2(y,x)^T \left(\mcl{G} v \right)(y)dy \notag \\
& + \int_x^1 F_1(y,x)^T \left(\mcl{G} v \right)(y)dy \notag \\
=&\label{eqn:order1}\sum_{i=1}^4 \Theta_i(x),
\end{align} where
\begin{align*}
\Theta_1(x)=& \int_0^x \int_0^y F_2(y,x)^T G_1(y,z)v(z)dz dy,\\
\Theta_2(x)=& \int_0^x \int_y^1 F_2(y,x)^T G_2(y,z)v(z)dz dy,\\
\Theta_3(x)=& \int_x^1 \int_0^y F_1(y,x)^T G_1(y,z)v(z)dz dy,\\
\Theta_4(x)=& \int_x^1 \int_y^1 F_1(y,x)^T G_2(y,z)v(z)dz dy.
\end{align*} In each of the $\Theta_i(x)$ we change the order of integration and switch between the variables $y$ and $z$ to obtain
\begin{align*}
\Theta_1(x)=& \int_0^x \int_y^x F_2(z,x)^T G_1(z,y)dz v(y)dy,\\
\Theta_2(x)=& \int_0^x \int_0^y F_2(z,x)^T G_2(z,y)dz v(y)dz \\
&+ \int_x^1 \int_0^x F_2(z,x)^T G_2(z,y)dz v(y)dy,\\
\Theta_3(x)=& \int_0^x \int_x^1 F_1(z,x)^T G_1(z,y)dz v(y)dy \\
& + \int_x^1 \int_y^1 F_1(z,x)^T G_1(z,y)dz v(y)dy,\\
\Theta_4(x)=& \int_x^1 \int_x^y F_1(z,x)^T G_2(z,y)dz v(y)dz.
\end{align*} Substituting into~\eqref{eqn:order1} and consequently in~\eqref{eqn:order} we get
\begin{align}
\ip{\mcl{F}v}{\mcl{G}v}_{\mcl{L}_2}=&\int_0^1 \int_0^x v(x) K_1(x,y)v(y)dy dx \notag \\
&\label{eqn:order2}+\int_0^1 \int_x^1 v(x) K_2(x,y)v(y)dy dx,
\end{align} where
\begin{align*}
K_1(x,y)=& \int_0^y F_2(z,x)^TG_2(z,y)dz \\
&+ \int_y^x F_2(z,x)^T G_1(z,y)dz \\
& + \int_x^1 F_1(z,x)^T G_1(z,y)dz,\\
K_2(x,y)=& \int_0^x F_2(z,x)^T G_2(z,y)dz \\
&+ \int_x^y F_1(z,x)^T G_2(z,y)dz \\
&+ \int_y^1 F_1(z,x)^T G_1(z,y)dz.
\end{align*} Then, applying Lemma~\ref{lem:switch} to~\eqref{eqn:order2} completes the proof.
\end{proof}

\bibliographystyle{plain}
\bibliography{CDC17}

\end{document}